\documentclass[11pt]{article}
\usepackage[colorlinks]{hyperref}
\hypersetup{linkcolor=cyan,filecolor=cyan,citecolor=cyan,urlcolor=cyan}
\usepackage{xspace}
\usepackage{fullpage}
\usepackage{boxedminipage}
\usepackage[boxed]{algorithm}
\usepackage{epigraph}

\usepackage[sc]{mathpazo}
\usepackage{amsmath}
\usepackage{mathtools}
\usepackage{framed}
\usepackage[framemethod=tikz]{mdframed}
\usepackage{titlesec}
\usepackage{lipsum}%

\usepackage{tikz}
\usetikzlibrary{shapes.geometric}
\usetikzlibrary{arrows}
\usetikzlibrary{arrows.meta}
\usetikzlibrary{patterns}
\usetikzlibrary{shapes.misc}

\newcommand{\polylog}{\mathsf{polylog}}

\renewcommand{\paragraph}[1]{\vspace{0.15cm}\noindent {\bf #1}}



\usepackage{amsthm,amsmath,amssymb}
\usepackage{cleveref,aliascnt}
\newtheorem{proposition}{Proposition}
\newtheorem{theorem}{Theorem}

\newtheorem{definition}{Definition}

\newtheorem{lemma}{Lemma}

\newtheorem{claim}{Claim}

\newtheorem{corollary}{Corollary}

\usepackage{tikz}
\usepackage{relsize}
\usepackage{ctable}

\crefname{claim}{Claim}{Claims}
\crefname{observation}{Observation}{Observations}







\newcommand{\A}{{\mathcal A}}

\newcommand{\E}[1]{\mathbf E\! \left[ {#1} \right]}

\newcommand{\N}{{\mathbb{N}}}
\newcommand{\R}{{\mathbb{R}}}

\newcommand{\eg}  {e.g.,\ }

\newcommand{\etal}{{et~al.\ }}

\newcommand{\ignore}[1]{}

\newcommand{\m}{\widetilde{m}}

\title{Parallel Balanced Allocations:\\ The Heavily Loaded Case}

\author{
 Christoph Lenzen\thanks{MPI for Informatics, Germany. Email: 
 \texttt{clenzen@mpi-inf.mpg.de}.}
 \and
Merav Parter\thanks{Weizmann Institute, Israel. Email: 
\texttt{merav.parter@weizmann.ac.il}. Supported in part by grants from BSF-NSF 
No.\ 2017758 and  
Minera  No.\ 713238.}
\and
Eylon Yogev\thanks{Technion, Israel. Email: \texttt{eylony@gmail.com}. 
Supported by the European Union's Horizon 2020 research and 
innovation program under grant agreement No.\ 742754.}
}

\date{}


\begin{document}
\maketitle
\begin{abstract}
We study parallel algorithms for the classical balls-into-bins problem, in which $m$ balls acting in parallel as separate agents are placed into $n$ bins. Algorithms operate in synchronous rounds, in each of which balls and bins exchange messages once. The goal is to minimize the maximal load over all bins using a small number of rounds and few messages. 

While the case of $m=n$ balls has been extensively studied, little is known about the heavily loaded case. In this work, we consider parallel algorithms for this somewhat neglected regime of $m\gg n$. The na\"ive solution of allocating each ball to a bin chosen uniformly and independently at random results in maximal load $m/n+\Theta(\sqrt{m/n\cdot \log n})$ (for $m\geq n \log n$) with high probability (w.h.p.). In contrast, for the sequential setting Berenbrink et al.~\cite{BerenbrinkCSV06} showed that letting each ball join the least loaded bin of two randomly selected bins reduces the maximal load to $m/n+O(\log\log m)$ w.h.p. To date, no parallel variant of such a result is known.

We present a simple parallel \emph{threshold} algorithm that obtains a 
maximal load of $m/n+O(1)$ w.h.p.\ within $O(\log\log 
(m/n)+\log^* n)$ rounds. The algorithm is \emph{symmetric} (balls and bins all 
``look the same''), and balls send $O(1)$ messages in expectation. 
The additive term 
of $O(\log^* n)$ in the complexity is known to be tight for such 
algorithms~\cite{LenzenW16}. We also prove that our analysis is tight, i.e., 
algorithms of the type we provide must run for $\Omega(\min\{\log\log 
(m/n),n\})$ rounds w.h.p.

Finally, we give a simple \emph{asymmetric} algorithm (i.e., balls are aware of 
a common labeling of the bins) that achieves a maximal load of $m/n + O(1)$
in a {\em constant} number of rounds w.h.p. Again, balls send only a single 
message per round, and bins receive $(1+o(1))m/n+O(\log n)$ messages w.h.p.
This goes to show that, similar to the case of $m=n$, asymmetry allows for highly efficient solutions.
\end{abstract}

\thispagestyle{empty}
\newpage

\setcounter{page}{1}
\section{Introduction}
We consider simple parallel algorithms for the heavily loaded regime of the well-known balls into bins problem. When $m$ balls are thrown randomly into $n$ bins, the maximal load can be bounded by $m/n+\Theta(\sqrt{\log n \cdot m/n})$ \emph{with high probability (w.h.p.)}\footnote{Throughout this work, we say that an event happens with high probability if it succeeds with probability of at least $1-1/n^c$ for any constant $c\geq 1$.} for any $m=\Omega(n\log n)$ (e.g., by Chernoff's bound). In the balanced case, i.e., for $m=n$, it was demonstrated that parallel communication between balls and bins can considerably improve this load using a small number of messages and rounds. In contrast, for the $m \gg n$ regime, to this point, there was no (communication efficient) parallel algorithm that outperforms the na\"ive random allocation. 

In this paper, we ask how to leverage communication to improve the maximal load 
for this heavily loaded case. We are in particular intrigued by the number of 
communication rounds required to achieve the almost perfect maximal load of 
$m/n+O(1)$. We focus primarily on algorithms which are symmetric (bins are 
anonymous) and use few messages. 

\paragraph{The Classical Setting of Balls into Bins.} 
Balls into bins and related problems have been studied thoroughly in 
a wide range of models. The high-level goal of any balls-into-bins 
algorithm is to allocate ``efficiently'' a set of items (e.g., jobs, balls) 
to a set of resources (machines, bins). The na\"ive single-choice algorithm 
places each ball into a bin chosen independently and uniformly at random.  It 
is well-known that for $m=n$ this achieves a maximal load of $O(\log n/\log\log n)$ with 
high probability.
In a seminal work, Azar \etal introduced the multiple-choice paradigm, in which the 
balls are placed into bins \emph{sequentially} one by one, and each ball is 
allocated to the least loaded among $d\geq 2$ randomly selected bins. They 
showed that this algorithm achieves, w.h.p., a maximal load of $O(1+\log\log 
n/\log d)$, an exponential improvement over the single choice algorithm.

Adler et al.~\cite{AdlerCMR98}
introduced the parallel framework for the balls-into-bins problem, with the objective of parallelizing this sequential 
multiple choice process. They restricted attention to simple and natural parallel algorithms that are both (i) symmetric: all balls and bins run the same algorithms, and bins are anonymous; and (ii) non-adaptive: each ball picks a set of $d$ bins uniformly and independently at random and communicate only with these bins throughout the protocol. They showed that such symmetric and non-adaptive algorithms can achieve a total load of $\Theta(\log\log n/\log\log \log n)$ with the same number of rounds. 

Lenzen and Wattenhofer \cite{LenzenW16} relaxed the non-adaptivity constraint, and presented an adaptive and symmetric algorithm that obtains a bin load of $2$, w.h.p., within $O(\log^* n)$ rounds and using a total of $O(n)$ messages. Again, this is tight for this class of algorithms, and dropping any of the constraints the lower bound imposes leads to constant-round solutions.

\paragraph{The Heavily Loaded Case of Balls into Bins.} 
It has been noted in the literature that the
$m\gg n$ regime of the balls into bins problem is fundamentally different than 
the case where $m=n$; this explains why attempts to extend the analysis of 
existing $m=n$ algorithms to the heavily loaded case mostly fail 
\cite{BerenbrinkCSV06,TalwarW14}. In a breakthrough result, Berenbrink et 
al.~\cite{BerenbrinkCSV06} provided an ingenious analysis for the multiple 
choice process in the heavily loaded regime. They showed that when balls are 
allowed to pick the best among $2$ random choices, the bin load becomes 
$m/n+O(\log\log n)$ with high probability. Thus the $2$-choice process 
super-exponentially improves the excess bin load compared to the single choice random 
allocation \emph{and} makes it independent of $m$.

To the best of our knowledge, there has been no work that parallelizes this 
sequential process in a similar manner as has been done by Adler et al.\ 
and others for the $m=n$ case.\footnote{We note that Stemann~\cite{stemann96parallel} considers the possibility that $m>n$, but provides algorithms for load $O(m/n)$ only; for almost the entire range of parameters, the na\"ive algorithm or using multiple instances of algorithms for $m\leq n$ yields better results.}
As a result, no better parallel algorithm has been known for this regime other than placing balls randomly into bins.

\paragraph{Our Results.}
We propose a very simple \emph{threshold algorithm} (cf.~\cite{AdlerCMR98}) that appears to be suitable for the heavily loaded regime. In every synchronous round $r$ of our algorithm, each unallocated ball sends a join-request to a bin chosen uniformly at random. Bins will accept balls up to a load of $T_r$ (a threshold that increases with $r$). Thus, a bin with load $\ell$ at the beginning of round $r$ acknowledges up to $T_r-\ell$ requests (chosen arbitrarily among all received requests) and declines the rest. We show that such a simple algorithm achieves a maximal load of $m/n+O(1)$ within $O(\log\log (m/n))$ rounds with high probability.
\begin{theorem}\label{thm:upper}
There exists a parallel symmetric and adaptive algorithm of $O(\log\log (m/n) + 
\log^*n)$ rounds that achieves maximal load of $m/n+O(1)$ with high probability. The algorithm uses a total of $O(m)$ messages, w.h.p.
\end{theorem}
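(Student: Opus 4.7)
The plan is to analyze the threshold algorithm in two phases. Let $B_r$ denote the number of unallocated balls at the start of round $r$ (so $B_1=m$); the thresholds $T_1\le T_2\le\ldots$ will be chosen so as to remain within $m/n+O(1)$ throughout, which automatically bounds the maximum load. The crux is to show that $B_r$ can be reduced to $O(n)$ within $O(\log\log(m/n))$ rounds under this constraint, after which an existing balanced-case algorithm can finish the job.

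\emph{Phase~1 (doubly-exponential decay).} In round $r$, each of the $B_r$ unallocated balls picks a uniformly random bin, so the requests landing at a particular bin are $\mathrm{Binomial}(B_r,1/n)$; a Chernoff/Bernstein tail analysis then shows that the overflow above a bin's residual capacity is sharply concentrated around its expectation. Choosing the threshold increments $T_r-T_{r-1}$ as a summable sequence, I would show that a first round reduces $B_1=m$ to $B_2=O(\sqrt{mn})$, after which each subsequent round roughly square-roots $B_r/n$, reaching $B_r=O(n)$ within $O(\log\log(m/n))$ rounds. Concentration of $B_{r+1}$ around its expectation follows from the negative association of bin loads under uniform allocation, and a union bound over the $O(\log\log(m/n))$ rounds yields the required high-probability guarantees.

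\emph{Phase~2 (cleanup).} Once $B_r=O(n)$, the residual instance is essentially a balanced balls-into-bins problem on the available capacities. I would invoke the symmetric adaptive algorithm of Lenzen and Wattenhofer~\cite{LenzenW16} to place these last balls in $O(\log^* n)$ further rounds while incurring only $O(1)$ additional maximum load. For the message count, each ball sends a single request per round until accepted, so the total number of ball-to-bin messages across Phase~1 is $\sum_r B_r$; the doubly-exponential decay makes this geometric sum $O(m)$, and Phase~2 contributes $O(n)$ further messages by~\cite{LenzenW16} (bin acknowledgments add at most a constant factor).

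\emph{Main obstacle.} The technical heart of the proof is the concentration analysis in Phase~1: one must simultaneously (i) keep the cumulative threshold increments at $O(1)$ so the final maximum load is $m/n+O(1)$, and (ii) maintain enough slack in each round for the $B_r$ recurrence to produce doubly-exponential decay. This is delicate because the residual capacities $T_r-L_i^{(r-1)}$ are themselves random and correlated across bins, so a clean argument likely requires tracking not just $B_r$ but a potential function capturing the distribution of residual capacities (for instance, their empirical variance). The transition between the Gaussian regime $B_r/n\gg 1$, where overflow per bin is driven by the standard deviation $\sqrt{B_r/n}$, and the Poisson regime $B_r/n=O(1)$, where overflow is essentially quadratic in the mean, is the most subtle point and dictates the exact form of the recurrence.
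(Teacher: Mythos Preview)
Your two-phase outline and the invocation of \cite{LenzenW16} for the cleanup are correct, but the proposal misses the central idea of Phase~1 and, as written, does not achieve $O(\log\log(m/n))$ rounds.

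You set the thresholds $T_r$ \emph{above} $m/n$, with increments $T_r-T_{r-1}$ summing to $O(1)$. This creates exactly the difficulty you flag as the ``main obstacle'': after one round a constant fraction of the bins are full (load $=T_1$), and their residual capacity in round~2 is only $T_2-T_1=O(1)$. But $B_2\approx\sqrt{mn}$ balls now land uniformly, so each such full bin receives $\Theta(\sqrt{m/n})$ requests and rejects essentially all of them. Hence $B_3=\Theta(B_2)$: you get constant-factor, not square-root, decay, and the round count degrades to $\Omega(\log n)$ rather than $O(\log\log(m/n))$. (The paper itself sketches this $\Omega(\log n)$ obstruction for the na\"ive fixed threshold $T=m/n+O(1)$.) Enlarging the increments to $\Theta(\sqrt{B_r/n})$ so as to absorb these rejections would make their sum $\Theta(\sqrt{m/n})$, violating the load bound. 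So conditions~(i) and~(ii) in your ``main obstacle'' paragraph are genuinely in conflict for thresholds above $m/n$.

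The paper sidesteps this entirely by choosing the thresholds \emph{below} $m/n$: in round~$i$ it sets $T_i=m/n-(\m_i/n)^{2/3}$, where $\m_i$ is a deterministic estimate of the remaining ball count satisfying $\m_{i+1}/n=(\m_i/n)^{2/3}$. Because $T_i-T_{i-1}$ is comfortably below the mean number of requests per bin, a single Chernoff bound shows that w.h.p.\ \emph{every} bin receives at least $T_i-T_{i-1}$ requests, so after round~$i$ all bins sit at exactly the same load $T_i$. The residual capacities are therefore deterministic and equal across bins---the correlations and potential functions you worry about never arise---and the number of unallocated balls is exactly $m-nT_i=\m_{i+1}$. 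This yields the clean recurrence $\m_{i+1}/n=(\m_i/n)^{2/3}$ and hence $O(\log\log(m/n))$ rounds. Only once $\m_i/n\in\polylog n$ does one need a negative-association argument, and there it bounds the number of \emph{underfilled} bins rather than any overflow. Since every $T_i<m/n$, Phase~1 contributes nothing beyond $m/n$ to the maximum load; Phase~2 adds the $O(1)$.
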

Note that, trivially, one can place all balls within $n$ rounds, by each ball approaching each bin once (and bins using thresholds of $L_r=\lceil m/n \rceil$ in all rounds). Thus the above time bound is of interest whenever $\log \log (m/n) \gg n$.

The technically most challenging part is our lower bound argument. 
We consider a special class of threshold algorithms to which our algorithm belongs. This class consists of all threshold algorithms in which in every round, every (unallocated) ball contacts $O(1)$ bins sampled uniformly and independently at random. This class generalizes our algorithm in two ways. First, it allows a ball to contact $O(1)$ bins per round instead of only $1$ (as in the main phase of our algorithm). Second, it allows bins to have distinct threshold values, which can depend on the state of the entire system in an arbitrary way. 
\begin{theorem}\label{thm:lower}
Any threshold algorithm in which in each round balls choose $O(1)$ bins to 
contact uniformly and independently at random w.h.p.\ runs for $\Omega(\min\{\log\log 
(m/n),2^{n^{\Omega(1)}}\})$ rounds or has a maximal load of $m/n+\omega(1)$.
\end{theorem}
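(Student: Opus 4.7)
The plan is to track $M_r$, the number of unallocated balls at the start of round $r+1$ (so $M_0=m$), and establish a one-round recursion $M_{r+1} \ge c\sqrt{M_r \cdot n}$ for an absolute constant $c > 0$, valid while $M_r$ is large enough for Chernoff-type concentration. Iterating this from $M_0=m$ yields $M_r \ge c' \, n \cdot (m/n)^{1/2^r}$, which stays $\ge n > 0$ for all $r \le \log_2\log_2(m/n) - O(1)$; since $M_r > 0$ means the algorithm has not terminated, this forces $\Omega(\log\log(m/n))$ rounds. Throughout I assume the algorithm achieves maximal load $\le m/n + c_0$ w.h.p.\ for some constant $c_0$, as otherwise the alternative conclusion of the theorem holds.

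Under this assumption, every threshold satisfies $T_{i,r} \le m/n + c_0$, so the total available capacity at the start of round $r+1$ is $A_{r+1} := \sum_i (T_{i,r+1}-\mathrm{load}_{i,r}) \le M_r + c_0 n$. Write $X_i$ for the number of requests bin $i$ receives in round $r+1$ and $a_i$ for its available capacity. Bin $i$ places at most $\min(X_i, a_i) = a_i - (a_i - X_i)_+$ balls, so the total number of balls allocated this round is at most $A_{r+1}-W_{r+1}$, where $W_{r+1} := \sum_i (a_i-X_i)_+$ is the ``wasted capacity.'' This yields the key inequality
\[
M_{r+1} \;\ge\; W_{r+1} - c_0 n,
\]
reducing the task to lower-bounding $W_{r+1}$.

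This is the heart of the argument. For $d=1$ (each unplaced ball picks one random bin), $X_i$ is approximately $\mathrm{Poisson}(M_r/n)$ and concentrates via Chernoff when $M_r/n$ is sufficiently large. The key observation is that $a \mapsto \mathbb{E}[(a - X_i)_+]$ is convex in $a$, so by Jensen's inequality across the bins under the constraint $\sum_i a_i = A_{r+1}$, the capacity allocation $(a_i)$ minimizing the expected wasted capacity is the uniform one $a_i \equiv \bar a := A_{r+1}/n$. A direct Poisson tail calculation gives $\mathbb{E}[(\bar a - X_i)_+] = \Theta(\sqrt{M_r/n})$ whenever $\bar a \ge M_r/n - O(\sqrt{M_r/n})$ (the opposite regime is handled by the trivial bound $M_{r+1} \ge M_r - A_{r+1}$). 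Hence $\mathbb{E}[W_{r+1}] = \Omega(\sqrt{M_r n})$, and a bounded-differences (Azuma) argument on the Doob martingale over the ball choices upgrades this to the desired w.h.p.\ bound. For $d \ge 2$, a direct calculation shows that even under the adversary's best allocation, each ball fails to be placed with probability at least $((d-1)/d)^d = \Omega_d(1)$, yielding the stronger estimate $M_{r+1} = \Omega_d(M_r) \ge \Omega(\sqrt{M_r n})$ when $M_r \ge n$.

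The main obstacle is that the thresholds---and hence the capacities $a_i$---can be chosen adaptively based on the full history, including all prior random outcomes. The convexity/Jensen step is what handles this: it reduces the worst-case analysis over $(a_i)$ to a single uniform instance, which then succumbs to classical Poisson tail estimates. A secondary technical subtlety, reflected in the $2^{n^{\Omega(1)}}$ cap, is that the Chernoff/Azuma concentration bounds require $M_r/n \ge n^{\Omega(1)}$, so the iteration remains valid only while this holds---limiting the number of applicable rounds when $m$ is astronomically large relative to $n$.
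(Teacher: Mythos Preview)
Your $d=1$ argument via wasted capacity is a genuinely different and attractive route compared to the paper's. The paper bounds the number of \emph{rejected} balls (those landing in overfull bins) using Berry--Esseen for a single bin, then groups bins into $O(\log n)$ classes by the value $S_i=\mu+2\sqrt{\mu}-L_i$, picks the heaviest class, and applies negative-association Chernoff there; this costs a factor $t=\Theta(\min\{\log n,\log(M/n)\})$ and yields the recursion $M_{r+1}=\Omega(\sqrt{M_r n}/t)$ (hence exponent base $3$ in the iteration). Your convexity/Jensen reduction to uniform capacities, followed by $\mathbb{E}[(\mu-X)_+]=\Theta(\sqrt{\mu})$ and an Azuma bound on the ball-choice martingale, avoids the class decomposition entirely and gives the cleaner $M_{r+1}=\Omega(\sqrt{M_r n})$. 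Both lead to $\Omega(\log\log(m/n))$; yours is more elementary, while the paper's gives finer per-class information that is not ultimately needed.

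The $d\ge 2$ step, however, does not work as stated. Your claim that ``each ball fails to be placed with probability at least $((d-1)/d)^d$'' is false in the model of the theorem: balls may contact \emph{at most} $d$ bins, so a degree-$d$ algorithm can simply have every ball send a single request and run your own $d=1$ analysis in reverse---one round then leaves only $O((M_r/n)^{2/3}n)=o(M_r)$ balls, not $\Omega_d(M_r)$. Even if you insisted on exactly $d$ contacts per ball, the asserted constant is not a ``direct calculation'': it would require showing that no capacity profile $(a_i)$ with $\sum a_i\le M_r+O(n)$ and no independent per-bin acceptance rule can beat the uniform-random baseline, which is nontrivial (and your wasted-capacity bound becomes vacuous here, since with $\mathbb{E}[X_i]\approx dM_r/n \gg a_i$ one gets $W_{r+1}\approx 0$). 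The paper sidesteps all of this with a simulation: a degree-$d$, phase-length-$k$ algorithm is converted into a degree-$1$, phase-length-$1$ algorithm running for $d$ times as many rounds (Lemmas~\ref{lem:generald} and~\ref{lem:reducek}), after which the $d=1$ bound applies directly. You should replace your $d\ge 2$ paragraph with this reduction, or else supply a correct per-round argument that accommodates balls using variable numbers of contacts.
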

This theorem applies to the algorithm of \Cref{thm:upper}, but not to the trivial $n$-round algorithm mentioned above. We conjecture that any threshold algorithm runs for $\Omega(\min\{\log\log (m/n),n\})$ rounds or incurs larger loads, but a proof seems challenging due to the obstacles imposed by balls using differing probability distributions for deciding which bins to contact.

\paragraph{Asymmetric Algorithms.}
In the asymmetric setting, all bins are distinguished based on globally known IDs, which can be rephrased as all balls' port numberings of bins being consistent. A perfect allocation can be obtained trivially in this setting, simply by letting all balls contact the first bin, which then can send to each ball the bin ID to which it should be assigned. To rule out such trivial solutions, one should restrict attention to algorithms in which no bin receives (significantly) more messages than necessary. Concretely, bins should receive no more than $(1+o(1))m/n+O(\log n)$ messages; as with constant probability some bin will receive $m/n+\sqrt{m/n}+\log n$ messages even if each ball sends a single message, this is the best we can hope for.
\begin{theorem}\label{thm:assym}
There exists a parallel asymmetric algorithm that achieves a maximal load of 
$m/n+O(1)$ within $O(1)$ rounds w.h.p., where each bin receives a total of
$(1+o(1))m/n+O(\log n)$ messages w.h.p. 
\end{theorem}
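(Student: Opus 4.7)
The plan is to design a two-stage asymmetric protocol and to analyze its load and message complexity using standard concentration tools combined with a structural argument that exploits the common bin labels. In Stage~1, every ball sends a single placement request to a uniformly random bin, and each bin $b$ accepts up to a threshold $T_1 := \lceil m/n\rceil + c_1$ requests (for a suitably chosen constant $c_1$), breaking ties deterministically using ball IDs and rejecting the rest. A standard Chernoff bound would immediately show that each bin receives $(1+o(1))m/n+O(\log n)$ messages in this single round, already establishing the required message-complexity bound for Stage~1. In Stage~2, each rejected ball, knowing both the bin that turned it away and its rank within that bin, contacts a secondary bin chosen as a deterministic function of the pair (rejecting bin, rank); this is made possible precisely by the common labeling of bins being available to every ball. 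Each secondary bin then accepts the redirected requests up to the same threshold, and the process repeats for $O(1)$ rounds.

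For the analysis, I would first use a Gaussian-tail/Chernoff estimate to show that, w.h.p., the total number of balls rejected in Stage~1 is $O(\sqrt{mn})$ while the total slack at ``light'' bins (those receiving fewer than $T_1$ requests) is of at least the same order. Next, I would argue that the deterministic hash used in Stage~2 distributes the rejected balls roughly uniformly over bins, so that each bin receives at most $O(\log n)$ secondary requests w.h.p.; combined with the Stage-1 count this keeps the per-bin message total within $(1+o(1))m/n+O(\log n)$. Finally, a union bound over bins and over the constantly many Stage-2 rounds would give that every ball is eventually placed and no bin exceeds load $m/n+O(1)$.

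The main obstacle I expect is controlling the correlation between a bin's Stage-1 load and the number of Stage-2 requests that target it: a bin already sitting at its Stage-1 threshold could be pushed past $m/n+O(1)$ if it also receives too many secondary requests. Handling this will require a careful coupling between the uniform random choices of Stage~1 and the deterministic hashing of Stage~2, showing that the two sources of randomness are essentially independent at the level of per-bin loads. This decorrelation argument, together with verifying that the $O(1)$-round iteration actually terminates within the message budget, is the technical heart of the proof.
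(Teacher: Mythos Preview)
Your proposal has a genuine quantitative gap that prevents it from achieving $O(1)$ rounds. After Stage~1 with threshold $T_1=\lceil m/n\rceil+c_1$, the number of rejected balls is $\Theta(\sqrt{mn})$, as you correctly note. But then your claim that ``each bin receives at most $O(\log n)$ secondary requests'' in Stage~2 is arithmetically wrong: distributing $\Theta(\sqrt{mn})$ balls over $n$ bins yields $\Theta(\sqrt{m/n})$ per bin, which is $\omega(\log n)$ whenever $m\gg n\log^2 n$. More importantly, after Stage~1 a constant fraction of bins sit exactly at the threshold $T_1$ and have \emph{zero} slack; any secondary request routed to such a bin is rejected again. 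Since the deterministic hash depends only on (rejecting bin, rank) and neither the ball nor its rejecting bin has any information about which other bins are full, the hash cannot avoid full bins. Consequently each Stage~2 round places only a constant fraction of the remaining balls, and you need $\Omega(\log(m/n))$ iterations rather than $O(1)$. The ``decorrelation'' you flag is not merely a technical nuisance; it is the actual obstruction, and no coupling argument can overcome the fact that full bins are genuinely blocking.

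The paper's construction sidesteps this by a different structural idea: it partitions the bins into \emph{superbins}, each governed by a designated leader bin (identified via the common labeling). Balls send their single request to a uniformly random leader; the leader then deterministically spreads the accepted requests round-robin across the real bins of its superbin. Because the leader sees all requests to its group, allocation within a superbin is perfectly balanced, and because the number of superbins is chosen so that each leader receives $\Theta(m/n)$ requests, concentration guarantees that every leader meets a conservatively chosen threshold exactly. Hence all real bins receive the \emph{same} load each round, and the remaining mass shrinks fast enough to finish in three rounds. The key idea you are missing is this leader-based coordination that turns the asymmetry into exact per-bin balance, rather than a hash that blindly redirects rejected balls.
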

This goes to show that, similar to the case of $m=n$, asymmetry allows for 
highly efficient solutions. In what follows, we give a high-level overview of 
the proofs of \Cref{thm:upper,thm:lower}. The full proof of \Cref{thm:assym} is 
given in \Cref{sec:apx-asym}.

\paragraph{Additional Related Work.}
Following \cite{AzarBKU99}, multiple-choice algorithms 
have been studied extensively in the sequential setting. For instance, 
\cite{Vocking03} considered a variant of this setting where the selections made 
by balls are allowed to be nonuniform and dependent. The works 
\cite{shah2002use,MitzenmacherPS02} have studied the effect of memory when 
combined with the 
multiple choice paradigm and showed that a choice from memory is 
asymptotically better than a random choice. The analysis of the multiple choice 
process for the heavily loaded case was first provided by 
\cite{BerenbrinkCSV06} and considerably simplified by \cite{TalwarW14}. See 
\cite{Wieder17} for a survey on sequential multiple-choice algorithms.

Turning to the distributed/parallel setting, \cite{sauerwald2012tight} studied distributed load balancing protocols on general graph topologies. \cite{berenbrink2012multiple} considers a semi-parallel framework for balls into bins, in which the balls arrive in batches rather than one by one as in the sequential setting. 
\cite{AlistarhDRS14} consider a variant of the balls-into-bins problem, namely, the renaming problem and the setting of synchronous message passing with failure-prone servers. Finally, \cite{bertrand20141} introduced a general framework for parallel balls-into-bins algorithms and generalizes some of the algorithms analyzed in \cite{LenzenW16}.

\subsection{Our Approach in a Nutshell.}
\paragraph{The Symmetric Algorithm.}
To get some intuition on threshold algorithms, we start by considering the most 
na\"ive algorithm, in which each bin agrees to accept at most $T=m/n+O(1)$ 
balls in total, without modifying its threshold over the course of the 
algorithm. That is, in every round each unallocated ball picks a bin uniformly 
and independently at random, each bin agrees to accept at most $T$ balls in 
total, and rejects the rest. Clearly, the final load of each bin is bounded by 
$T$ and hence it remains to consider the running time of such an algorithm. One 
can show that, w.h.p., after a single round a constant fraction of the bins are 
going to be full (i.e., contain $T$ balls). Hence, the probability of an 
unallocated ball to contact a full bin in the following rounds is constant. 
This immediately entails a running time lower bound of $\Omega(\log n)$, even 
if the balls may contact a constant number of bins per round.

The crux idea of our symmetric algorithm is to set the threshold \emph{lower} 
than the
allowed bin load (\eg in the first round we set $T=m/n-(m/n)^{2/3}$).  At 
first glance, this seems 
unintuitive as a
bin might reject balls despite the fact that it still has room. The key
observation here is that setting the threshold a bit smaller than the allowed
load keeps all bins equally loaded throughout the algorithm, yet permits placing
all but a few of the remaining balls in each step. This prevents the situation
where an unallocated ball blindly searches for a free bin in between many 
occupied
bins. Crunching the numbers shows that this approach reduces the number of 
remaining
balls to $O(n)$ in $O(\log \log (m/n))$ rounds, after which the established 
techniques
for the case of $m=n$ can be applied.

\paragraph{The Lower Bound.}
Our lower bound approach considers a natural family of threshold algorithms, 
which in particular captures the above algorithm. Every algorithm is this 
family has the following structure. In each round $i$, every unallocated ball 
picks $O(1)$ bins independently and uniformly at random. Every bin $j$ accepts 
up to $T_{i,j}$ requests and rejects the rest. The value $T_{i,j}$ can be 
chosen non-deterministically by the bins.

This class is more general than our algorithm, in several ways. Most significantly, it allows 
bins to have different thresholds. The decision of these can depend on the 
system state at the beginning of each round (excluding future random choices of balls). 
Moreover, we allow for algorithms that ``collect'' allocation requests from balls for several 
rounds before allocating them according to the chosen threshold. While this is not a 
good strategy for algorithms, is it useful in the simulation part of the proof, which is 
explained next.

The proof follows in two steps. First, we prove the lower bound for degree one 
algorithms (where balls contact a single bin in each iteration) in the family 
described above. 
The argument for this step is somewhat technical, and 
it is based 
on focusing on one class of bins that have roughly the same number of rejected 
balls in expectation. We show that one can find such a class of bins which 
captures a large fraction of the expected number of rejected balls. We then 
exploit the fact that all bins in this class are roughly the same, which allows 
us to provide concentration results for that class. 

The second step is a simulation technique in which we show how to 
simulate an algorithm with higher degree by an algorithm from the above family. 
Roughly speaking, we simulate a degree $d$ algorithm by contacting a single bin 
over $d$ different rounds. Only after these $d$ rounds the bins decide which balls to 
accept. Here we crucially rely on the fact that our lower bound for single degree 
algorithms includes such algorithms.

%
%

\section{Preliminaries}
\begin{definition}[With high probability (w.h.p.)]
We say that the random variable $X$ attains values from the set $S$ with high 
probability, if $\Pr[X \in S] \ge 1 - 1/n^c$ for an arbitrary, but
fixed constant $c > 0$. More simply, we say $S$ occurs w.h.p.
\end{definition}
We use some theory on negatively associated random variables, which is 
given in \cite{DubhashiR98}.

\begin{definition}[Negative Association]\label{def:na}
	A set of random variables $X_1,\ldots,X_n$ is said to be negatively 
	associated 
	if for any two disjoint index sets $I,J \subseteq [n]$ and two functions 
	$f, g$ 
	that are both monotone increasing or both monotone decreasing, it holds that
	\begin{align*}
		\E{f(X_i: i \in I) \cdot g(X_j : j \in J)} \le 
		\E{f(X_i : i \in I)} \cdot 	\E{g(X_j : j \in J)}.
	\end{align*}
\end{definition}

\begin{lemma}[Chernoff Bound]\label{lem:chernoff}
Let $X_1,\ldots,X_m$ be independent or negatively associated 
random variables that take the value 
1 with probability $p_i$ and 0 otherwise, $X=\sum_{i=1}^{m}X_i$, and 
$\mu=\E{X}$. Then for any $0 < \delta < 1$,
$$\Pr[X < (1-\delta)\mu] \le e^{-\delta^2 \mu /2}$$
and
$$\Pr[X > (1+\delta)\mu] \le e^{-\delta^2 \mu /3}.$$
If $\mu > 2\log m$, with $\delta = \sqrt{2\log m / \mu}$ we get
that
$$\Pr[X < \mu - \sqrt{2\mu\log m}] \le 1/m,$$
and
$$\Pr[X > \mu + \sqrt{3\mu\log m}] \le 1/m.$$
\end{lemma}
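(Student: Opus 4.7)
The plan is to apply the standard moment generating function (MGF) trick. Fixing a parameter $t > 0$ and applying Markov's inequality to $e^{tX}$ yields
\[
\Pr[X > (1+\delta)\mu] \le e^{-t(1+\delta)\mu}\,\E{e^{tX}}.
\]
The heart of the proof is to bound $\E{e^{tX}}$ by $\exp((e^t-1)\mu)$ and then optimize over $t$. Using $\E{e^{tX_i}} = 1 + p_i(e^t-1) \le \exp(p_i(e^t-1))$ (from $1+x \le e^x$), this reduces to establishing $\E{\prod_i e^{tX_i}} \le \prod_i \E{e^{tX_i}}$. For independent $X_i$ this is immediate by multiplicativity; for negatively associated $X_i$, since each $x \mapsto e^{tx}$ is monotone increasing for $t>0$, I would apply \Cref{def:na} inductively---peeling off one variable at a time with $I=\{i\}$ and $J=[n]\setminus I$, taking as $g$ the product $\prod_{j \in J} e^{tX_j}$, which is still monotone increasing in each coordinate---to obtain the desired product bound.

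Plugging $t = \ln(1+\delta)$ into the resulting inequality yields the classical form
\[
\Pr[X > (1+\delta)\mu] \le \Bigl(\frac{e^\delta}{(1+\delta)^{1+\delta}}\Bigr)^\mu,
\]
and the Taylor estimate $(1+\delta)\ln(1+\delta) - \delta \ge \delta^2/3$ valid on $0 < \delta < 1$ (a brief calculus computation) gives the stated bound $e^{-\delta^2\mu/3}$. The lower tail is symmetric: take $t < 0$, set $t = \ln(1-\delta)$, and invoke $(1-\delta)\ln(1-\delta) + \delta \ge \delta^2/2$. The NA case requires rechecking here: now $e^{tx}$ is monotone \emph{decreasing} in $x$, but \Cref{def:na} still applies since both $f$ and the inductive $g$ are monotone decreasing.

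The two w.h.p.\ corollaries then follow by direct substitution into these relative-error bounds: with $\delta = \sqrt{2\log m/\mu}$ we have $\delta^2 \mu/2 = \log m$, so the lower-tail probability is at most $1/m$; the upper-tail statement instead uses $\delta = \sqrt{3\log m/\mu}$ so that $\delta^2 \mu/3 = \log m$, matching the coefficient $\sqrt{3\mu\log m}$. The assumption $\mu > 2 \log m$ ensures $\delta < 1$ so that the relative-error bounds are actually applicable.

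The only real obstacle is the negative-association argument: one must verify that the ``peeling'' induction stays inside the class of monotone functions required by \Cref{def:na}, and that the \emph{same} direction of monotonicity is used consistently for $f$ and $g$ (increasing for the upper tail, decreasing for the lower tail). Once this bookkeeping is done, the rest is textbook Chernoff algebra which does not distinguish between the independent and NA settings.
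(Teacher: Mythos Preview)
The paper does not prove \Cref{lem:chernoff}; it is stated in the Preliminaries as a standard tool, with the negative-association extension implicitly drawing on~\cite{DubhashiR98}. So there is no paper proof to compare against. Your proposal is the textbook MGF argument, correctly adapted to the NA case via inductive peeling; note that this induction silently uses the fact that any subcollection of NA variables is again NA (so that after splitting off $X_i$ you may recurse on the remaining variables), which is also standard from~\cite{DubhashiR98}. The only minor wrinkle is that, as you observe, the upper-tail corollary really wants $\delta=\sqrt{3\log m/\mu}$, and for this $\delta<1$ one strictly needs $\mu>3\log m$ rather than the stated $\mu>2\log m$; but that is slack in the lemma statement, not in your argument.
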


\begin{proposition}[\cite{DubhashiR98}, Proposition 
7(2)]\label{prop:monotone-na}
Non-decreasing (or non-increasing) functions of disjoint subsets of
negatively associated variables are also negatively associated.
\end{proposition}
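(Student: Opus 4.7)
The plan is to verify the defining inequality of negative association directly by ``pulling back'' test functions on the derived variables to test functions on the original ones. Write $Y_i := h_i(X_j : j \in I_i)$ for disjoint index sets $I_1,\dots,I_k \subseteq [n]$, and assume without loss of generality that all $h_i$ are coordinatewise non-decreasing (the non-increasing case is symmetric, or reduces to the other by negating each $h_i$). I must show that for any disjoint $A,B \subseteq [k]$ and any pair of monotone functions $f,g$ of matching type,
\[
    \E{f(Y_i : i \in A)\, g(Y_i : i \in B)} \;\leq\; \E{f(Y_i : i \in A)} \cdot \E{g(Y_i : i \in B)}.
\]

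The key construction is to define the composite functions $F(x_j : j \in J_A) := f\bigl(h_i(x_j : j \in I_i) : i \in A\bigr)$ on the variables indexed by $J_A := \bigcup_{i \in A} I_i$, and analogously $G$ on $J_B := \bigcup_{i \in B} I_i$. Two observations then close the argument. First, since compositions of coordinatewise monotone functions are coordinatewise monotone of the same type, $F$ inherits the type of $f$ and $G$ the type of $g$; in particular $F$ and $G$ are monotone of matching type. Second, because the $I_i$'s are pairwise disjoint and $A \cap B = \emptyset$, the sets $J_A$ and $J_B$ are disjoint subsets of $[n]$. Applying \Cref{def:na} to the negatively associated family $X_1,\dots,X_n$ with the test functions $F$ and $G$ yields $\E{F \cdot G} \leq \E{F}\cdot \E{G}$, which by construction is exactly the desired inequality, proving that $Y_1,\dots,Y_k$ are negatively associated.

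I do not anticipate a genuine obstacle; the proposition is essentially a tautological consequence of \Cref{def:na} once the right composite functions are identified. The only points that require minimal care are (i) verifying that composition preserves the monotonicity type, which reduces to the one-line observation that $x \leq x'$ coordinatewise implies $h_i(x_j : j \in I_i) \leq h_i(x'_j : j \in I_i)$ for each $i$, so the vector fed into $f$ is pointwise larger; and (ii) confirming that ``a disjoint union of pairwise disjoint sets'' remains disjoint from another such union built from a disjoint collection of indices, which is immediate from $A \cap B = \emptyset$.
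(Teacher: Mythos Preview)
Your argument is correct and is the standard proof of this closure property: pull back the test functions $f,g$ on the derived variables $Y_i$ to composite test functions $F,G$ on the original variables $X_j$, check that disjointness of the index sets and monotonicity are preserved under composition, and apply \Cref{def:na}. Note, however, that the paper does not supply its own proof of this proposition---it is quoted verbatim from \cite{DubhashiR98} as a known fact---so there is nothing to compare your approach against beyond observing that it matches the classical one-paragraph justification.
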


Our lower bound proof makes use of the following Berry-Esseen 
inequality. 
\begin{theorem}[Berry-Esseen 
Inequality~\cite{berry1941accuracy,Esseen1942}]\label{theorem:berry-esseen}
Let $Y_j$, $j\in \{1,\ldots,M\}$, be i.i.d.\ random variables with
$\E{Y_j}=0$, $\sum^2\coloneqq \E{|Y_j|^2}>0$, and $\rho\coloneqq
\E{|Y_j|^3}<\infty$, and let $Y=\sum_{j=1}^M Y_j$. Denote by $F$ the 
cumulative distribution functions of $\frac{Y}{\sum \sqrt{M}}$
and by $\phi$ the cumulative distribution function of the standard normal 
distribution. Then
\begin{equation*}
\sup_{s\in \R}\{|F(s)-\phi(s)|\}\leq \frac{c \rho}{\sum^3 \sqrt{M}}\ , 
\mbox{~~for a constant $c$}.
\end{equation*}
\end{theorem}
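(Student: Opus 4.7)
The statement is the classical Berry--Esseen inequality, which is a central limit theorem with an explicit convergence rate. Since this is a standard result from probability theory (indeed, the paper cites the original papers of Berry and Esseen), I would not reprove it from scratch in the body of the paper --- I would simply invoke the citation. That said, here is how I would plan a proof if pressed.

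The standard approach goes through characteristic functions together with a smoothing inequality. Let $Z = Y/(\sum \sqrt{M})$, write $F$ for its CDF and $\phi$ for the standard normal CDF, and let $\hat{F}$ and $\hat{\phi}$ denote their Fourier transforms. Let $\psi(t) = \E{e^{i t Y_1/(\sum\sqrt{M})}}$ be the characteristic function of a single normalized summand, so by independence the characteristic function of $Z$ is $\psi(t)^M$, while $\hat{\phi}(t) = e^{-t^2/2}$. The plan is to (i) expand $\psi(t)$ via Taylor's theorem around $t=0$, using that $\E{Y_1}=0$, $\E{Y_1^2}=\sum^2$, and $\E{|Y_1|^3}=\rho<\infty$, to obtain
\begin{equation*}
\psi(t) \;=\; 1 - \frac{t^2}{2M} + R(t), \qquad |R(t)| \;\le\; \frac{\rho\, |t|^3}{6\, \sum^3 M^{3/2}};
\end{equation*}
and then (ii) show $|\psi(t)^M - e^{-t^2/2}|$ is small on a suitable window, using $|a^M-b^M| \le M\max(|a|,|b|)^{M-1}|a-b|$ together with the standard fact that $|\psi(t)| \le e^{-t^2/(4M)}$ for $|t|$ not too large (which follows from the expansion).

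The connection between $\sup_s|F(s)-\phi(s)|$ and the characteristic-function gap is provided by the Esseen smoothing inequality: for every $T>0$,
\begin{equation*}
\sup_{s\in\R} |F(s)-\phi(s)| \;\le\; \frac{1}{\pi}\int_{-T}^{T} \frac{|\hat{F}(t) - \hat{\phi}(t)|}{|t|}\, dt + \frac{C}{T}
\end{equation*}
for an absolute constant $C$ (exploiting that $\phi$ has bounded density). I would then choose $T$ of order $\sum^3 \sqrt{M}/\rho$ and split the integral: on $|t|\le T$ the two steps above yield a pointwise bound of order $\rho |t|^3/(\sum^3 \sqrt{M}) \cdot e^{-t^2/4}$, whose integral against $1/|t|$ is $O(\rho/(\sum^3\sqrt{M}))$, and the $C/T$ term is of the same order. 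Combining these two estimates gives the claimed bound.

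The technically delicate step --- and the one I would expect to be the main obstacle --- is step (ii): controlling $|\psi(t)^M - e^{-t^2/2}|$ uniformly on the window $|t|\le T$. One must simultaneously keep $\psi(t)$ in a regime where the Taylor remainder is small \emph{and} keep $e^{-t^2/2}$ from becoming a distracting leading term, which is exactly what forces the choice $T \asymp \sum^3\sqrt{M}/\rho$ and the appearance of this ratio in the final bound. Everything else is routine manipulation of moments and the smoothing inequality.
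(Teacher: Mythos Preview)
Your instinct is exactly right: the paper does not prove this theorem at all. It is stated as a cited classical result (the references are to Berry~1941 and Esseen~1942) and then invoked as a black box inside the proof of \Cref{cl:onebin}. So the ``paper's own proof'' is simply the citation, which matches your first paragraph precisely.

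Your sketch of the characteristic-function plus Esseen smoothing-lemma argument is the standard textbook route and is sound as outlined; there is nothing to compare it against here, since the paper offers no alternative argument.
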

\paragraph{Symmetric Algorithm for $m=n$.}
Our algorithm for the heavily loaded regime uses the algorithm of \cite{LenzenW16} for allocating $n$ balls into $n$ bins. We denote this algorithm by $\A_{light}$. Specifically, we use the following theorem.
\begin{theorem}\label{thm:lenzen}[From {\cite{LenzenW16}}]
There exists a symmetric algorithm for placing $n$ balls into $n$ bins with the 
following properties w.h.p.: 
The algorithm terminates after $\log^* n + O(1)$ rounds with bin load at most 
$2$. 
The total number of messages sent is $O(n)$, where in each round 
balls send and receive $O(1)$ messages in expectation and 
	$O(\log n)$ many with high probability. Finally, in each round, 
bins send and receive $O(1)$ messages in expectation and $O(\log n/ 
	\log \log n)$ many with high probability.
\end{theorem}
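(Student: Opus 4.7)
The plan is to construct and analyze a symmetric adaptive threshold algorithm whose per-round reduction in the number of unallocated balls is fast enough to reach $\log^*n+O(1)$ rounds, while keeping the total message complexity $O(n)$ and the per-bin and per-ball message loads as claimed. The algorithm proceeds in adaptive rounds: in each round $r$, every ball that is still unallocated samples $d_r$ bins uniformly and independently at random and sends each of them a request; each bin with current load $\ell$ accepts up to $2-\ell$ requests from its incoming batch (breaking ties arbitrarily) and rejects the rest, so that the final load never exceeds $2$. The sequence $d_1,d_2,\ldots$ is tuned so that as the number $k_r$ of surviving balls shrinks, each ball is allowed to contact more bins per round, while keeping $\sum_r k_r d_r=O(n)$.

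The heart of the proof is a per-round reduction lemma: if $k$ balls remain and each contacts $d$ uniformly random bins, then with high probability the number of balls still unallocated at the end of the round is bounded by some function $f(k,d,n)$ that decays super-linearly in $k/n$. The expected number of rejections is read off from the distribution of bin loads, which is dominated by a Binomial$(kd,1/n)$ variable; each bin contributes rejections only when its incoming load exceeds its residual capacity, and for small $k/n$ the tail probability falls off sharply. To upgrade an expectation bound to a w.h.p. bound, I use that the per-bin load counts are negatively associated (\Cref{prop:monotone-na}), which makes Chernoff concentration (\Cref{lem:chernoff}) applicable to the total number of rejected balls.

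With this lemma in hand, I would assemble a schedule of $d_r$ and show that (i) the number of unallocated balls drops from $n$ to $O(1)$ in $\log^*n+O(1)$ rounds, (ii) the expected messages per round per ball and per bin are $O(1)$, and (iii) the w.h.p. per-round message counts for balls and bins are $O(\log n)$ and $O(\log n/\log\log n)$, respectively, with all rounds combined contributing $O(n)$ messages in total. The final $O(1)$ residual balls are handled by a constant number of additional rounds using brute force (each remaining ball samples $\Theta(\log n)$ bins in parallel to find an accepting one w.h.p.).

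The main obstacle I anticipate is the careful balancing act in choosing the degrees $d_r$ and thresholds so that all three resource bounds hold simultaneously with the $\log^*n$ running time, and correctly handling the sharp transition between the bulk phase (where $k$ is $\Omega(n)$ and one only gets a constant-factor shrink per round via the threshold $2$) and the later super-exponential phase where aggressive degree growth kicks in. The most delicate step will be establishing the tight w.h.p. upper bound of $O(\log n/\log\log n)$ on per-bin received messages, since this requires combining sampling-rate bounds across all rounds and leveraging \Cref{prop:monotone-na} to strengthen union-bound arguments, rather than relying on pessimistic worst-case estimates.
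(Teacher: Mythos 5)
First, note that the paper does not prove this statement at all: Theorem~\ref{thm:lenzen} is imported verbatim from \cite{LenzenW16} and used as a black box in the second phase of $\A_{heavy}$, so there is no in-paper argument to compare yours against. Your outline is in the right spirit of the known proof (a symmetric adaptive algorithm with capacity-$2$ bins in which surviving balls increase the number of contacted bins from round to round, with negative association supplying the concentration), but as written it is a plan rather than a proof. The entire quantitative content of the theorem lives in exactly the two places you defer: the per-round reduction lemma, which you state only as ``bounded by some function $f(k,d,n)$ that decays super-linearly in $k/n$,'' and the degree schedule $d_r$, which you say you ``would assemble.'' To get $\log^* n+O(1)$ rounds you need the specific recursion in which $k$ surviving balls each contact $d\approx n/k$ bins, so that a ball fails only if all $d$ of its requests hit saturated bins, giving $k'\approx k\cdot 2^{-\Theta(n/k)}$ and hence tower-function decay; without exhibiting this recursion and verifying that $\sum_r k_r d_r=O(n)$ under it, none of the claimed bounds (round count, total messages, per-bin $O(\log n/\log\log n)$) actually follows from what you wrote.

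There are also two concrete points your sketch glosses over that would need repair. First, a ball contacting $d$ bins may receive several accept messages, and a bin that ``accepts'' a ball does not necessarily receive it; your load accounting (bins accept up to $2-\ell$ and hence never exceed load $2$) is fine for the upper bound on load, but the reduction lemma must count a ball as allocated if \emph{at least one} of its $d$ requests is accepted, which is a per-ball failure event, not a per-bin rejection count --- the quantity you propose to concentrate (total rejections summed over bins) does not directly bound the number of surviving balls, since one ball can account for up to $d$ rejections. Second, your worry about a ``bulk phase'' needing constant-factor shrinkage is a red herring under the correct recursion (already at $k=n$ with $d=O(1)$ the survivor count drops by a constant factor, and after $O(1)$ rounds $n/k$ is large enough for the doubly-exponential decay to take over), but your text leaves this transition unresolved, and resolving it is precisely what separates a $\log^* n$ bound from a $\log\log n$ bound.
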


\section{The Parallel Symmetric Algorithm}\label{sec:parallel}
In this section, we describe our symmetric algorithm for allocating $m$ balls 
into $n$ bins. We begin by describing the precise model in which the algorithm 
works.

\paragraph{The Model.}
The system consists of $m$ bins and $n$ balls, and operates in the synchronous 
message passing model, where each round consists of the following steps.
\begin{enumerate}
\item Balls perform local computations and send messages to arbitrary bins.
\item Bins receive these messages, perform local computations and send messages 
to any balls they have been contacted by in this or earlier rounds.
\item Balls receive these messages and may commit to a bin (and terminate).
\end{enumerate}
All algorithms may be randomized and have unbounded computational resources;
however, we strive for using only very simple computations.

\paragraph{High-Level Description.}
The algorithm consists of two phases. The first phase consists of $O(\log\log (m/n))$
rounds, at the end of which the number of unallocated balls is $O(n)$. The
second phase consists of $O(\log^* n)$ rounds and completes the allocation by
applying \Cref{thm:lenzen}~\cite{LenzenW16}.

For simplicity, we will assume that all values specified in the following are integers; as we aim for asymptotic bounds, rounding has no relevant impact on our results.
In our algorithm, the threshold values of all bins are the same, but depend on the current round.
In the first round, all bins set their threshold to $T=m/n-(m/n)^{2/3}$, each ball picks a single bin uniformly at random, and bins accept at most $T$ balls and reject the rest. 
Applying Chernoff's bound, we see that w.h.p.\ each bin is contacted by at 
least $m/n-\sqrt{10\log n\cdot m/n}>T$ balls. Hence, each bin has exactly $T$ 
allocated balls after the first round. Accordingly, the number of unallocated 
balls after the first round is $m'=m-T\cdot n=O(m^{2/3}n^{1/3})$. We continue 
the same way in the second round, handling an instance with $m'$ balls and $n$ 
bins. It follows that the number of remaining balls after $i$ rounds is bounded 
by $O(m^{(2/3)^i} n^{1-(2/3)^i})$. When $m'$ gets very close to $n$, i.e., 
$m'\in n\polylog(n)$, concentration is not sufficiently strong any more to 
guarantee that \emph{all} bins receive the desired number of balls. However, 
one can show that w.h.p.\ this holds true for the vast majority of bins. 
Overall, we show that after $O(\log\log (m/n))$ rounds, $O(n)$ unallocated 
balls remain. 

At this point, we employ the parallel algorithm of Lenzen and Wattenhofer \cite{LenzenW16}, which takes additional $O(\log^* n)$ rounds. To this end, we let each bin act as $O(1)$ virtual bins. This way, at most $O(1)$ additional balls will be allocated to each bin, as the algorithm guarantees a maximum bin load of $2$. We next describe the algorithm and its analysis in detail.

\paragraph{The Algorithm $\A_{heavy}$:}
\begin{enumerate}
\item Set $\m_0=m$.
\item For $i=0,\ldots, O(\log \log (m/n))$ do:
\begin{enumerate}
	\item Each ball sends an allocation request to a uniformly sampled bin.
	\item Set $T_i=\frac{m}{n} - (\frac{\m_i}{n})^{\frac{2}{3}}$. Each bin accepts up to $T_i-\ell_i$ balls, where $\ell_i$ is the load of the bin at the beginning of the round.
	\item Set $\m_{i+1} = \m_i^{2/3}n^{1/3}$.
\end{enumerate}
	\item At this point at most $O(n)$ balls are unallocated (w.h.p.). Run 
	$\A_{light}$ for the remaining balls with each bin simulating $O(1)$ virtual 
	bins.
\end{enumerate}

\begin{theorem}\label{thm:upper2}
Algorithm $\A_{heavy}$ finishes after $O(\log \log (m/n)+ \log^* n)$ rounds 
with maximal load of $m/n + O(1)$, w.h.p., using in total $O(m)$ messages (over all rounds).
Each ball sends and receives $O(1)$ messages in expectation and $O(\log n)$ many 
w.h.p. Each bin sends and receives $(1+o(1))m/n+O(\log n)$ messages w.h.p.
\end{theorem}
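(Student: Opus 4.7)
The plan is to split the analysis into the threshold loop (phase~1, lines 1--2 of $\A_{heavy}$) and the cleanup step (phase~2, line~3), and show that phase~1 brings the number of unallocated balls from $m$ down to $O(n)$ within $k=O(\log\log(m/n))$ rounds while keeping every bin load equal to the deterministic threshold $T_{i-1}\le m/n$. Phase~2 then invokes $\A_{light}$ with each physical bin simulating a constant number of virtual bins; by Theorem~\ref{thm:lenzen} this contributes $O(\log^* n)$ rounds and only $O(1)$ additional load per physical bin, so the final load is $T_{k-1}+O(1)\le m/n+O(1)$.

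For phase~1 I would prove by induction on $i$ the following invariant: w.h.p., at the start of round $i$ every bin has load exactly $T_{i-1}$ (set $T_{-1}=0$) and there are exactly $\m_i=m-nT_{i-1}$ unallocated balls. The inductive step reduces to showing that in round $i$ every bin receives at least $T_i-T_{i-1}$ requests. Letting $\mu_i:=\m_i/n$, the contacts at a fixed bin are $\mathrm{Bin}(\m_i,1/n)$ with mean $\mu_i$, and $T_i-T_{i-1}=\mu_i-\mu_i^{2/3}$. Applying Lemma~\ref{lem:chernoff} with $\delta=\mu_i^{-1/3}$ gives a per-bin failure probability $\exp(-\mu_i^{1/3}/2)\le n^{-(c+1)}$ provided $\mu_i\ge (2(c+1)\log n)^3$; a union bound over the $n$ bins preserves the invariant in each such round. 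Since $\mu_{i+1}=\mu_i^{2/3}$ gives $\mu_i=(m/n)^{(2/3)^i}$, only $O(\log\log(m/n))$ rounds are needed before $\mu_i$ falls below any chosen constant.

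The step I expect to be the main obstacle is the handful of late rounds where $\mu_i$ drops below $\polylog n$: per-bin Chernoff is now too weak for a union bound, yet we still need the recurrence $\m_{i+1}\le\m_i^{2/3}n^{1/3}$ to hold w.h.p. My plan for these rounds is to forgo per-bin concentration and bound directly the global overflow $\m_{i+1}=\sum_b \max\{0,X_b-(T_i-T_{i-1})\}$ with $X_b\sim\mathrm{Bin}(\m_i,1/n)$: standard binomial-tail estimates give $\E{\m_{i+1}}\le\m_i^{2/3}n^{1/3}$ up to lower-order terms, and since the $X_b$ are negatively associated and positive-part truncation is monotone, Proposition~\ref{prop:monotone-na} keeps the truncated variables negatively associated, so Lemma~\ref{lem:chernoff} concentrates their sum around its expectation. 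This suffices to drive $\m_k$ to $O(n)$ after at most a constant number of rounds beyond the strict Chernoff regime, keeping the overall count at $O(\log\log(m/n))$ phase-1 rounds.

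Finally, for the message accounting: in each threshold round an unallocated ball sends one request and receives at most one reply, so the per-round ball cost is $O(1)$ deterministically; summing over phase~1 the total message count is $\sum_i \m_i=O(m)$ (dominated by $\m_0=m$, since $\m_{i+1}/\m_i=(n/\m_i)^{1/3}\le 1$ throughout phase~1), and $\A_{light}$ contributes $O(n)$ more by Theorem~\ref{thm:lenzen}. Per-bin receipt is dominated by round~$0$, where a Chernoff bound on $\mathrm{Bin}(m,1/n)$ gives $(1+o(1))m/n+O(\sqrt{(m/n)\log n})\le (1+o(1))m/n+O(\log n)$ w.h.p.; subsequent phase-1 rounds contribute strictly smaller terms by the same estimate applied to $\m_i$, and $\A_{light}$'s per-bin $O(\log n/\log\log n)$ bound is absorbed into the same $O(\log n)$ additive slack.
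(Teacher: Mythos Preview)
Your overall plan mirrors the paper's: split phase~1 into an early regime where per-bin Chernoff plus a union bound keeps every bin exactly at threshold, and a late regime (once $\m_i/n$ drops below $\Theta(\log^3 n)$) handled by a global concentration argument, then hand off to $\A_{light}$. The early-round inductive step and the message accounting are essentially as in the paper.

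The gap is in your late-round step. You propose to concentrate $\sum_b \max\{0,\,X_b-(T_i-T_{i-1})\}$ via Lemma~\ref{lem:chernoff}, but that lemma is stated only for Bernoulli summands; the truncated overflows $(X_b-t)^+$ are not $0$--$1$ (a single $X_b$ can exceed $t$ by $\Theta(\log n/\log\log n)$ in these rounds), so the citation does not apply as written. Relatedly, once some bins fall below threshold you must track the \emph{actual} count $m_i$ rather than the deterministic estimate $\m_i$; your identity ``$\m_{i+1}=\sum_b\max\{0,X_b-(T_i-T_{i-1})\}$'' conflates the two, and the resulting excess $m_i-\m_i$ must be controlled across the $\Theta(\log\log\log n)$ late rounds (not $O(1)$ as you assert). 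The paper sidesteps both issues by switching to the $0$--$1$ indicators $Z_b$ that equal $1$ exactly when $X_b<T_i-T_{i-1}$; these are negatively associated by Proposition~\ref{prop:monotone-na}, so Lemma~\ref{lem:chernoff} applies cleanly, and the per-round shortfall in allocated balls is then bounded by $(T_i-T_{i-1})\sum_b Z_b$ and shown to sum to $O(n)$ over the late rounds.
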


\begin{proof}
For any round $i$ of step (2), let $m_i$ be the number of unallocated balls at 
the beginning of the round, and notice that $\m_i$ is the bin's estimate of 
$m_i$. Fix a round $i$. Let $X_b$ be a random variable indicating the number of balls 
that choose bin $b$ in round $i$ (we suppress the round index for ease of notation) and set $T_{-1}\coloneqq 0$.

Observe that $(\m_i/n)^{2/3}=\m_{i+1}/n$. Moreover, $m_i\geq \m_i$, as $n T_{i-1}=m-n(\m_{i-1}/n)^{2/3}=m-\m_i$ balls can be allocated by the end of round $i-1$. We make frequent use of these observations in the following. We start by bounding the probability that a bin gets ``underloaded'' in a given round, i.e., despite the conservatively small chosen threshold, it does not receive sufficiently many requests to allocate $T_i-T_{i-1}$ balls in round $i$.
\begin{claim}\label{cl:full}
$P[X_b< T_i-T_{i-1}]<e^{-(\frac{\m_i}{n})^{1/3}/2}$.
\end{claim}
\begin{proof}
For all $i$, it holds that
\begin{equation*}
T_i-T_{i-1} = \frac{\m_i}{n}-\frac{\m_{i+1}}{n}=\frac{\m_i}{n}-\left(\frac{\m_i}{n}\right)^{\frac{2}{3}}\,.
\end{equation*}
As $m_i\geq \m_i$, $\E{X_b} = \frac{m_i}{n}\geq \frac{\m_i}{n}$. Using a Chernoff bound with $\delta = (\frac{m_i}{n})^{-1/3}$, we get that
\begin{align*}
\Pr\left[X_b < T_i\right] &\le \Pr\left[X_b < \frac{m_i}{n} - 
\left(\frac{m_i}{n}\right)^{\frac{2}{3}}\right] \\
&= \Pr\left[X_b < (1-\delta)\E{X_b}\right] \le e^{-\delta^2\E{X_b}/2} \\
& = e^{-(\frac{\m_i}{n})^{1/3}/2}\,.\qedhere
\end{align*}
\end{proof}
Using this bound, we next show that each bin is allocated balls to match its 
threshold in each round, at least until only $n \polylog(n)$ balls remain.
\begin{claim}\label{cl:all}
Let $i_0\in O(\log \log (m/n))$ be minimal with the property that $\m_{i_0}\leq n c^3\log^3 n$ for a sufficiently large constant $c$. Then $m_{i_0}=\m_{i_0}$ w.h.p.
\end{claim}
\begin{proof}
We apply \Cref{cl:full} to all bins and all $i<i_0$. Using a union bound over all such events, the probability that $X_b<T_i-T_{i-1}$ in any such round for any bin is bounded by
\begin{align*}
\sum_{i=0}^{i_0-1}n e^{-(\frac{\m_i}{n})^{1/3}/2}\in 
O\left(n\sum_{i=0}^{i_0-1}2^{-i}e^{-\left(\frac{\m_{i_0-1}}{n}\right)^{1/3}/2}\right)
\subseteq n e^{-\Omega(c\log n)} \subseteq n^{-\Omega(c)}\,.
\end{align*}
Thus, w.h.p.\ each bin has exactly $\sum_{i=0}^{i_0-1}T_i = m/n-\m_{i_0}/n$ 
balls allocated to it at the end of round $i_0-1$. Therefore, 
$m_{i_0}=\m_{i_0}$ w.h.p. 
\end{proof}

It remains to consider the final $O(\log \log \log n)$ iterations required to reduce $\m_i$ to $O(n)$. As the number of balls is not large enough anymore to ensure sufficient concentration for individual bins, we consider the random variable $Y_i$ counting the number of balls allocated to all bins together in round $i$.
\begin{claim}\label{cl:most}
Let $i_1$ be minimal with the property that $\m_{i_1}\leq 2n$. For each round $i_0\leq i<i_1$ and any $c>0$, it holds that $Y_i\geq n\left(T_i-T_{i-1}-f(c)2^{-(i_1-i)}\right)$ with probability at least $1-n^{-c}$, where $f\colon \mathbb{R}^+\to \mathbb{R}^+$.
\end{claim}
\begin{proof}
Denote by $Z_b$, $b\in \{1,\ldots,n\}$, the indicator variables which are $1$ 
if bin $b$ receives fewer than $T_i-T_{i-1}$ allocation requests in round $i$ 
and $0$ else. By \Cref{cl:full} and linearity of expectation, we have for 
$Z=\sum_{b=1}^n Z_i$ that
\begin{equation*}
\E{Z}\leq e^{-(\frac{\m_i}{n})^{1/3}/2} n\,.
\end{equation*}
The random variables $Z_b$ are negatively associated (according to \Cref{def:na}). To see this, observe that by \cite[Theorem 13]{DubhashiR98} we know that $X_1,\ldots,X_n$ are 
negatively associated: the $Z_b$ are monotone nonincreasing functions of disjoint subsets of the negatively associated variables $X_1,\ldots, X_n$ (namely, $Z_b$ is a function of the set $\{X_b\}$), so \Cref{prop:monotone-na} applies. Therefore, we can apply a Chernoff bound (with $\delta=1$) to $Z$:
\begin{align*}
\Pr\left[Z > 2\E{Z}\right]  \le e^{-\E{Z}/3}\,.
\end{align*}
If $\E{Z}\geq 3c\log n$ for a sufficiently large constant $c$, this entails 
that $Z\leq 2\E{Z}$ w.h.p. Otherwise, we use a simple domination argument: each 
$Z_b$ is replaced by an independent 0-1 variable $Z_b'$ that is $1$ with 
probability $3c\log n/n$, so that for $Z'\coloneqq \sum_{b=1}^n Z_b'$ we have 
that
\begin{align*}
\Pr\left[Z > 2\E{Z'}\right]  \le \Pr\left[Z'>2\E{Z'}\right] \le e^{-c} < n^{-c}\,.
\end{align*}
Together, this entails that $Z\leq 6c\log n + 2e^{-(\frac{\m_i}{n})^{1/3}/2} n$ 
w.h.p. As $i\geq i_0$ (where $\m_{i_0}\in n\polylog(n)$), we have that 
$2^{i_1-i}\in 2^{O(\log \log \log n)}$ and $T_i-T_{i-1}\leq T_i \in 
\polylog(n)$. Hence $6c \log n (T_i-T_{i-1})<f(c) 2^{-(i_1-i+1)}n$ for a 
suitable choice of $f$. As $2e^{-(\frac{\m_i}{n})^{1/3}/2}$ decreases 
exponentially in $\m_i/n$, which itself decreases exponentially in $i$, we also 
have that
\begin{align*}
2e^{-(\frac{\m_i}{n})^{1/3}/2}(T_i-T_{i-1}) n < 2e^{-(\frac{\m_i}{n})^{1/3}/2} 
\frac{\m_i}{n}  <f(c) 2^{-(i_1-i+1)}n
\end{align*}
if $f(c)$ is sufficiently large. Noting that $Y_i\geq (T_i-T_{i-1})(n-Z)$, the claim follows.
\end{proof}

\begin{claim}
For any $c>0$, $m_{i_1}\leq g(c)n$ with probability at least $1-n^{-c}$, where $g\colon \mathbb{R}^+\to \mathbb{R}^+$. 
\end{claim}
\begin{proof}
The number of unallocated balls at the beginning of round $i_1$ is $m_{i_1} = m 
- \sum_{i=0}^{i_1-1} Y_i$. By \Cref{cl:all}, we have that $m_{i_0}=\m_{i_0}$ 
w.h.p., i.e., $Y_i=(T_i-T_{i-1})n$ for all $i<i_0$ w.h.p. For $i_0\leq i < 
i_1$, by \Cref{cl:most} we have that $Y_i\geq 
n\left(T_i-T_{i-1}-f(c)2^{-(i_1-i)}\right)$ w.h.p., where $c$ is the constant 
in the w.h.p.\ bound. Accordingly, by a union bound it holds that 
\begin{align*}
m_{i_1}  \leq m - n\left(\sum_{i=0}^{i_1-1}(T_i-T_{i-1}) + 
\sum_{i=i_0}^{i_1-1}f(c)2^{-(i_1-i)}\right)  < m - (\m_0 - \m_{i_1}) + f(c)n 
\leq (2+f(c))n
\end{align*}
with probability $1-(i_1-i_0+1)n^{-c}$. As $i_1-i_0\in O(\log \log \log n)$, $m_{i_1}\leq g(c)n$ w.h.p.\ for a suitable choice of $g$.
\end{proof}

Thus, after $i_1\in O(\log \log (m/n))$ iterations, at most $g(c)n$ balls remain unallocated w.h.p. We apply 
$\A_{light}$, where each of the $n$ bins simulates $g(c)$ virtual bins. That is, 
any ball allocated in one of the $g(c)$ virtual bins will be allocated in the real 
bin. Finally, by the properties of $\A_{light}$ we have that each virtual bin will 
have at most $2$ balls and thus each real bin 
will add at most $2g(c)$ balls. Overall, the total load of any bin is $m/n + O(1)$.

\paragraph{Number of Messages.}
We bound the number of messages sent by balls and bins. The number of messages 
sent in step 3 is specified in \Cref{thm:lenzen}. Thus, we analyze the messages 
in step 2.

Each ball sends at most 1 message per round, thus a total of $m_i$ in round 
$i$. Each 
round reduces the number of balls by at least a constant factor, cf.~\Cref{cl:all} and~\Cref{cl:most}.
Thus, the total number of messages sent is bounded by a geometric series, i.e., at most $2m$ messages are sent w.h.p.
Moreover, since all balls are identical we have that the 
expected number of message sent by a ball is $O(1)$. The probability that a 
single ball sends more than $\ell$ message is at most $2^{\ell}$. Thus, with 
high probability, a ball sends at most $O(\log n)$ messages. As all messages are sent to uniformly and independently random bins,
a standard Chernoff bound yields that each bin receives $(1+o(1))m/n + O(\log n)$ messages w.h.p.
\end{proof}
\paragraph{A Note on Success Probability.}
As described, Algorithm $\A_{heavy}$ succeeds with high probability in $n$. As $n$ may be a constant, this probability bound could be a constant as well. However, the case of $n < \log \log (m/n)$ can be covered by a trivial algorithm that deterministically guarantees a perfectly balanced allocation in $n$ rounds: balls try all bins one by one, in arbitrary order (which may be different for each ball). Bins use threshold $m/n$ in each round. If $n < \log \log (m/n)$, we can apply this trivial algorithm within our round budget. Combining both algorithms, we achieve a success probability of $1-o(1)$ for the entire parameter range.

\section{Lower Bound for Threshold Algorithms}
In this section, we present a lower bound for a special class of threshold algorithms. Roughly speaking, the only limitation that we pose here is that in each round unallocated balls pick the bins they contact independently and uniformly at random (as in our upper bound), and bins do not take decisions based on random choices of balls in future rounds. 

This class is more general than our algorithm, as it allows bins to have different thresholds. The decision on these thresholds can be an arbitrary function of the system state at the beginning of the round (excluding future random choices of balls); this does not affect the lower bound result. Moreover, we allow for algorithms that ``collect'' allocation requests from balls for $k\in \mathbb{N}$ rounds before allocating them according to the chosen threshold. While this is not a good strategy for algorithms, it is useful for generalizing our lower bound to algorithms in which balls contact multiple bins in each round, as it allows for a straightforward simulation argument.

\paragraph{The Family of Uniform Threshold Algorithms.}
The degree of an algorithm is the maximal number of bins that a ball contacts in 
a single phase. Formally, in this special threshold model a degree $d$ algorithm collecting for $k$ rounds works in 
phase $i$ as follows. Bins and balls have each an internal state $\sum$. 
Decisions are a function of $\sum$, which is updated after each operation, and 
(private) randomness.
We remark, however, that the structure imposed by the algorithm actually entails 
that the state of a non-allocated ball is simply a function of its own 
randomness only, as it received no information beyond all its requests being 
rejected.

In contrast, bins may perform more complex internal operations. 
Denote by $\ell_b$ the load of bin $b$ at the beginning of phase $i$, i.e., the 
number of balls it has sent accept messages to and which have not yet informed the bin that they are allocated to another bin.
\begin{enumerate}
	\item Each bin $b$ determines its threshold $T_b$ for the current phase. The decision on these thresholds is oblivious to (i.e., stochastically independent from) the random choices of balls in this and future phases.
	\item Based on its state, each ball $u$ chooses (at most) $dk$ bins $b^u_1,\ldots,b^u_{dk}$ uniformly and independently at random to send 
allocation requests to. These requests are sent over $k$ rounds, i.e., at most $d$ per round.
	\item Denote by $R_b$ the set of balls sending a request to bin $b$ in this phase. In the last round of the phase, bin $b$ responds with accept messages to a subset of $R_b$ of size $\max\{T_b-\ell,|R_b|\}$. This set is chosen based on the bin's port numbers for the requesting balls\footnote{For each bin, there is a bijection from $\{1,\ldots,m\}$ to the balls. Requests from a ball are received on the respective port and responses are sent to the same port. Balls have a port numbering of the bins for the same purpose.} and its internal randomness, subject to the constraint that each ball is accepted only once.
	\item Balls receive accept messages. They may decide on an accepting bin to be allocated to (provided they received at least one accept message so far) at the end of \emph{any} phase (i.e., they do not need to commit immediately), where this phase is a function of the phase number in which they received the first accept message.\footnote{This is not a good idea for algorithms, but we use it in our lower bound for a simulation argument.}
	\item Balls that selected a bin inform all bins that sent accept messages to it about its decision at the end of the phase.
\end{enumerate}
For technical reasons, we assume that bins port numbers are chosen adversarially, i.e., first the randomness of balls and bins is determined and then the port numbering is chosen. Algorithms must achieve their load guarantees despite this; note that our algorithms are capable of this.

The structure of this section is as follows. We first establish in \Cref{sec:degone} the lower bound for degree $1$ algorithms, i.e., threshold algorithms in which each unallocated ball contacts \emph{one} bin chosen independently and uniformly and random (our algorithm falls within this class). Then, in \Cref{sec:degd}, we extend the argument to any degree $d$ algorithms for $d=O(1)$ by providing a simulation result. 

\subsection{Lower Bound for Degree 1 Algorithms}\label{sec:degone}
Our lower bound shows that any algorithm in the threshold model, granted that balls 
choose bins uniformly at random, must use a large number of rounds.

\begin{theorem}\label{thm:rejection}
Suppose $M\in \N$ balls each contact one of $2\leq n\in
\N$ bins independently and uniformly at random, where $M\geq Cn$ for a
sufficiently large constant $C$. If bin $i\in \{1,\ldots,n\}$ accepts up to
$L_i$ balls contacting it, where $\sum_{i=1}^n L_i \in M+O(n)$ and $L_i$ does 
not depend on the balls' randomness, with probability at least 
$1-e^{-\Omega((n/t)^{2/3})}$
the number of balls that is not accepted is $\Omega(\sqrt{Mn}/t)$ for $t=\Theta(\min\{\log n, \log(M/n)\})$.
\end{theorem}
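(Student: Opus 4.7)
The plan is to exhibit, via pigeonholing over the thresholds, a large class of bins whose thresholds are close to the average $M/n$, and then use a central-limit-type estimate to lower-bound the rejections within this class. The underlying intuition is that each count $X_i$ is $\text{Binomial}(M,1/n)$-distributed with mean $M/n$ and standard deviation $\Theta(\sqrt{M/n})$, so a bin with $L_i\leq M/n+O(\sqrt{M/n})$ produces $\Omega(\sqrt{M/n})$ expected overflow. The factor $1/t$ in the target arises from bucketing the thresholds into $O(t)$ classes and concentrating inside a single class.

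First, bucket the bins by threshold. Define $B_j\coloneqq\{i:L_i-M/n\in[j\sqrt{M/n},(j+1)\sqrt{M/n})\}$ for integer $j\in\{-\Theta(t),\ldots,\Theta(t)\}$, together with catch-alls $B_{\mathrm{lo}}$ and $B_{\mathrm{hi}}$ for thresholds well below or well above this range. The case where $B_{\mathrm{lo}}$ is non-negligible is easy: each such bin has $X_i\geq L_i+\Omega(t\sqrt{M/n})$ w.h.p.\ by a Chernoff bound on the binomial $X_i$, so even a mildly populated $B_{\mathrm{lo}}$ already delivers the target overflow. Since $\sum_i(L_i-M/n)\in O(n)$ and each bin in $\bigcup_{j\geq 1}B_j\cup B_{\mathrm{hi}}$ contributes at least $\sqrt{M/n}$ to this sum, there are at most $O(n/\sqrt{M/n})=O(\sqrt{n^3/M})\leq n/2$ such ``high'' bins when $M\geq Cn$ with $C$ sufficiently large. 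Thus some bucket $B_j$ with $|j|\in O(1)$ (and in particular with $j\leq 0$) must contain $k\coloneqq\Omega(n/t)$ bins, all satisfying $L_i\leq M/n+O(\sqrt{M/n})$.

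Fix such a bucket $B$ and study $Y\coloneqq\sum_{i\in B}(X_i-L_i)_+$. Using Berry--Esseen (\Cref{theorem:berry-esseen}) to approximate the standardized variable $(X_i-M/n)/\sqrt{M/n}$ by a standard normal---the approximation error being $O(\sqrt{n/M})$, hence small for $M\geq Cn$ with $C$ large---a direct Gaussian-tail computation shows $\E{(X_i-L_i)_+}=\Omega(\sqrt{M/n})$ for each $i\in B$. Summing over $B$ yields $\E{Y}=\Omega(k\sqrt{M/n})=\Omega(\sqrt{Mn}/t)$, the desired magnitude for the expectation.

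The final step turns this expectation bound into a high-probability statement. The occupancy counts $\{X_i\}$ are negatively associated (a standard balls-into-bins fact), and $(X_i-L_i)_+$ is a nondecreasing function of $X_i$, so by \Cref{prop:monotone-na} the summands of $Y$ are negatively associated and a Chernoff-style concentration bound (\Cref{lem:chernoff}) applies. The hardest part is matching the precise tail exponent $(n/t)^{2/3}$: each summand has standard deviation $\Theta(\sqrt{M/n})$ but can be much larger on rare events, so a straightforward variance-based bound is not obviously tight. I anticipate the proof to truncate each summand at a carefully chosen level $\tau$, apply a Bernstein-type inequality to the truncated sum with variance proxy $\Theta(k\cdot M/n)$, and absorb the untruncated large-deviation events via a union bound over bins; balancing $\tau$ to equate the truncation cost and the Bernstein exponent is what I expect to produce the specific $(n/t)^{2/3}$ rate. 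This truncation--Bernstein balancing, rather than any conceptually new idea beyond the bucketing and Berry--Esseen estimate, looks like the real technical core of the argument.
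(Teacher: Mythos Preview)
Your high-level plan (Berry--Esseen for a per-bin overflow estimate, then bucket and pigeonhole to lose only a factor $t$, then concentrate) matches the paper's shape, but the specific bucketing you propose does not work, and the concentration step in the paper is rather different from what you sketch.

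The gap is in the pigeonhole step. Your claim that at most $n/2$ bins lie in $\bigcup_{j\ge1}B_j\cup B_{\mathrm{hi}}$ does not follow from $\sum_i(L_i-M/n)\in O(n)$, because negative summands can cancel positive ones. Concretely, take $L_1=0$ and $L_i=M/(n-1)$ for $i\ge2$; then $\sum_iL_i=M$, yet for $M\gg n^3$ every bin $i\ge2$ has $L_i-M/n\approx M/n^2\gg\sqrt{M/n}$, so $n-1$ bins are ``high'' and none of your buckets $B_j$ with $j\le0$ is populated. More generally, an arithmetic bucketing of $L_i-M/n$ in units of $\sqrt{M/n}$ does not single out $O(t)$ relevant classes. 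The paper instead buckets the \emph{slack} $S_i\coloneqq\mu+2\sqrt{\mu}-L_i$ (restricted to $S_i>0$) on a \emph{geometric} scale $I_k=\{i:S_i\in[2^k,2^{k+1})\}$. Since $S_i\le 2M/n$ there are at most $O(\log(M/n))$ classes, and since a single bin in the top class $I_{k_{\max}}$ already dominates all classes below $k_{\max}-\lceil\log n\rceil$, only $t=\min\{\lceil\log n\rceil,\lceil\log(M/n)\rceil+1\}$ consecutive classes are needed to capture half of $\sum_i\max\{S_i,0\}\ge\sqrt{Mn}$; pigeonhole then yields a class $I_k$ with $\sum_{i\in I_k}S_i\in\Omega(\sqrt{Mn}/t)$.

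For concentration the paper does not truncate and apply Bernstein. It works with the bounded indicators $z_i=\mathbf{1}[X^{(i)}\ge\mu+2\sqrt{\mu}]$, which are negatively associated, so Chernoff gives $\Pr[\sum_{i\in I_k}z_i<p_0|I_k|/2]\le e^{-\Omega(|I_k|)}$. The exponent $(n/t)^{2/3}$ then comes from a \emph{case split on $|I_k|$}: if $|I_k|\ge(n/t)^{2/3}$ the Chernoff bound already gives the stated probability, and each ``successful'' bin contributes at least $2^k$ rejections; if $|I_k|<(n/t)^{2/3}$, then since the class is geometric each $S_i\in\Omega(\sqrt{Mn}/(t|I_k|))$ is individually so large that a per-bin Chernoff bound on $X^{(i)}$ (followed by a union bound over $I_k$) delivers the same exponent. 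This dichotomy on $|I_k|$, not a truncation balance, is what produces the $(n/t)^{2/3}$ rate.
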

\begin{proof}
Denote by $\mu=M/n$ the expected number of messages received by bin $i$. Fix a
bin and denote by $X^{(i)}$ the random variable counting the number of messages it
receives. Because each ball picks a bin uniformly and independently at random,
we have that
\begin{equation*}
X^{(i)} = \sum_{j=1}^M X_j\,,
\end{equation*}
where the $X_j$ are independent $0$-$1$ variables attaining $1$ with probability
$p=1/n\leq 1/2$ (we omit $i$ for ease of notation). Our first goal is to provide a lower bound on the \emph{expected} number of rejected balls. To do that, we first analyze a single bin and show the following:
\begin{claim}\label{cl:onebin}
Any bin has load at least $\mu+2\sqrt{\mu}$ with probability $p_0=\Omega(1)$. 
\end{claim}
\begin{proof}
We apply the Berry-Esseen Inequality (see
\Cref{theorem:berry-esseen}) to the random variables $Y_j\coloneqq X_j-p$, $j\in
\{1,\ldots,M\}$.
Thus, $\sum = \sqrt{p(1-p)}$ and $\rho=p(1-p)(1-2p(1-p))$, yielding that
\begin{align*}
\sup_{x\in \R}\{|F(x)-\phi(x)|\}\leq \frac{c (1-2p(1-p))}{\sqrt{p(1-p)M}} 
\stackrel{p\leq 1/2}{\leq}
\frac{c(1-p)}{\sqrt{p(1-p)M}}<\frac{c}{\sqrt{pM}}\leq\frac{c}{\sqrt{C}}
\end{align*}
in the terminology of the theorem, where $Y=\sum_{j=1}^M X_j-\mu$, i.e., $Y$
equals the deviation of the load of bin $i$ from its expectation. Thus, the
theorem implies that for all $x\geq 0$, we have that
\begin{align*}
\Pr\left[Y\geq x\sqrt{\frac{\mu}{2}}\right]
\stackrel{p\leq 1/2}{\geq} \Pr\left[Y\geq x\sqrt{(1-p)\mu}\right] =
\Pr\left[Y\geq x\sum\sqrt{M}\right] \geq 1-F(x)-\frac{c}{\sqrt{C}}\,.
\end{align*}
Choosing $x=2\cdot \sqrt{2}$ and using that $C$ is sufficiently large, it
follows that
\begin{equation*}
P\left[X^{(i)}\geq \mu+2\sqrt{\mu}\right] \in \Omega(1)\,.\qedhere
\end{equation*}
\end{proof}
Thus, we have shown that any bin has load at least $\mu+2\sqrt{\mu}$ with
probability $p_0\in \Omega(1)$, causing it to reject at least
$\mu+2\sqrt{\mu}-L_i$ balls (provided that $\mu+2\sqrt{\mu}\geq L_i$). 
\begin{corollary}\label{cor:exp}
At least  $p_0 \cdot \sqrt{Mn}$ balls are rejected in expectation for $p_0\in \Omega(1)$.
\end{corollary}
\begin{proof}
By \Cref{cl:onebin}, the expected number of rejected balls for bin $i$ is at least $p_0 \cdot \max\{\mu+2\sqrt{\mu}-L_i,0\}$. Thus, by linearity of expectation the expected number of rejected balls is at least
\begin{align*}
p_0\sum_{i=1}^n \max\{\mu+2\sqrt{\mu}-L_i,0\} \geq p_0\left( M+2\sqrt{M
n}-\sum_{i=1}^n L_i\right) \geq p_0\sqrt{Mn}\,,
\end{align*}
where the final step exploits that $\sqrt{Mn}\geq \sqrt{C}n$ with $C$ being
sufficiently large. 
\end{proof}
So far, we have shown that the expected number of rejected balls is sufficiently large. One of the major obstacles for providing a concentration result comes from the fact that the number of rejected balls might vary considerably between bins (e.g., due to different threshold values).
To overcome this, our proof strategy is based on finding a sufficiently ``heavy" subset of bins that have roughly the same number of rejected balls in expectation. 

Towards that goal, for every bin $i$, we look at the value $S_i\coloneqq \mu+2\sqrt{\mu}-L_i$ and restrict attention to all bins satisfying that $S_i>0$. These bins are now divided into classes where, for $k\in \mathbb{Z}_{\geq 0}$, bin $i\in I_k\subseteq \{1,\ldots,n\}$ iff
$S_i\in [2^k,2^{k+1})$. Let $I^*$ be the class of all bins with $S_i\in (0,1)$. 

The selection of the class of bins for which we will show concentration is done in two steps.
First, we find at most $t\coloneqq \min\{\lceil\log n\rceil,\lceil\log (M/n)\rceil+1\}$ (plus 1) particular classes that together capture at least half of the expected value of rejected balls. Once we do that, we focus on the \emph{heaviest} class among these $t$ classes, hence loosing only a factor of $t$ in our bounds. Concretely, denoting by $k_{\max}$ the largest value of $k$ such that $I_{k_{\max}}\neq \emptyset$, the following holds.
\begin{claim}\label{cl:few_classes}
Let $k_{\min}\coloneqq \max\{k_{\max}-\lceil\log n\rceil+1,0\}$. Then the expected number of rejected balls by bins  $i \in [k_{\min}, k_{\max}]$ is at least $p_0\sqrt{Mn}/2$. In addition, $k_{\max}-k_{min}\leq t$.
\end{claim}
\begin{proof}
First, suppose that $k_{\max}\leq t$. Observe that the total contribution of 
all bins $i \in I^*$ is at most $n$, since $\sum_{i \in I^*}S_i \leq n$. By 
the prerequisite that $M\geq Cn$ for a sufficiently large constant $C$, we may 
assume that $C\geq 4/p_0^2$ and get that $n\leq \sqrt{Mn/C}\leq 
p_0\sqrt{Mn}/2$. As by \Cref{cor:exp} at least $p_0\sqrt{Mn}$ balls are 
rejected in expectation, the classes $1,\ldots, k_{\max}$ capture at least half 
of this expectation. 

Second, consider the case that $k_{\max}>t$. We claim that this entails that $t=\lceil\log n\rceil$, as $t=\lceil\log (m/n)\rceil+1$ would yield for all $i$ that
\begin{equation*}
\mu + 2\sqrt{\mu}-L_i\leq \mu + 2\sqrt{\mu}=\frac{M}{n}+2\sqrt{\frac{M}{n}}\leq \frac{2M}{n} \leq 2^t\,,
\end{equation*}
implying that $k_{\max}\leq t$. Therefore, indeed $t=\lceil\log n\rceil$ and hence
$k_{\min}= k_{\max}-t$. It follows that
\begin{equation*}
\sum_{i\in I^*}S_i + \sum_{k< k_{\min}}\sum_{i\in I_k}S_i
\leq  n \cdot \frac{2^{k_{\max}}}{n} \leq \sum_{i\in I_{k_{\max}}}
S_i\,.
\end{equation*}
Using the same expression for the expected number of rejected balls as in the proof of \Cref{cor:exp}, we get that
\begin{align*}
p_0\sum_{k=k_{\min}}^{k_{\max}}\sum_{i\in I_k} S_i \geq 
\frac{p_0}{2}\left(\sum_{i\in I^*}S_i + \sum_{k\in 
\mathbb{Z}_0}\sum_{i\in I_k}S_i\right) 
=\frac{p_0}{2}\sum_{i=1}^n\max\{\mu+2\sqrt{\mu}-L_i,0\}\geq 
\frac{p_0\sqrt{M/n}}{2}
\end{align*}
balls are rejected in expectation by bins in classes $k_{\min},k_{\min}+1,\ldots,k_{\max}$. As in the first case $k_{\max}-k_{\min}\leq t-0=t$ and in the second case $k_{\max}-k_{\min}=t$, this completes the proof.
\end{proof}

By the pigeonhole principle and \Cref{cl:few_classes}, there must be a class $k\in [k_{\min},k_{\max}]$ satisfying that
\begin{equation*}
p_0\sum_{i\in I_k} S_i \geq \frac{p_0\sqrt{Mn}}{2(t+1)}\,.
\end{equation*}
Denote by $z_i$, $i\in I_k$, the indicator variables that are $1$ iff
$X^{(i)}\geq \mu+2\sqrt{\mu}-L_i$. By \cite[Theorem 
13]{DubhashiR98} and \Cref{prop:monotone-na}
these variables are negatively associated. Setting $Z\coloneqq \sum_{i\in
I_k}z_i$, we have that $\E{Z}\geq p_0 |I_k|$, and by Chernoff's
bound~(Lemma~\ref{lem:chernoff}), it follows that
\begin{equation*}
\Pr\left[Z< \frac{p_0|I_k|}{2}\right]\leq e^{-\Omega(|I_k|)}\,.
\end{equation*}
If $|I_k|\geq (n/t)^{2/3}$, then we have that with probability $1-e^{-\Omega((n/t)^{2/3})}$, the number of
rejected balls is at least
\begin{equation*}
2^{k-1} p_0|I_k|\geq \frac{p_0}{4}\sum_{i\in I_k} S_i\in
\Omega\left(\frac{\sqrt{Mn}}{t}\right).
\end{equation*}
It remains to consider the case that $|I_k|< (n/t)^{2/3}$. 
Because up to factor $2$ all bins in $I_k$ have the same $S_i$ value, it holds for each $i\in I_k$ that
\begin{equation}\label{eq:li}
S_i=\mu+2\sqrt{\mu}-L_i\in \Omega\left(\frac{\sqrt{Mn}}{t \cdot |I_k|}\right).
\end{equation}
Let $\alpha\coloneqq \sqrt{\mu} \cdot n/(t \cdot |I_k|)> \sqrt{\mu}\cdot (n/t)^{1/3}>\sqrt{\mu}=\sqrt{M/n}$. By Inequality~\eqref{eq:li} and because $M\geq Cn$ for sufficiently large $C$,
\begin{equation*}
L_i\leq \mu+2\sqrt{\mu}-3\alpha \leq \mu-\alpha\,.
\end{equation*}
As $L_i\geq 0$, this bound also implies that $\delta\coloneqq \alpha/(2\mu) \in (0,1)$.
As $X^{(i)}$ is the sum of independent $0$-$1$ variables, we can thus apply
Chernoff's bound to $X^{(i)}$ to see that for sufficiently large $n$,
\begin{align*}
\Pr\left[X^{(i)}-L_i< \alpha/2\right]
&\leq \Pr\left[X^{(i)}\leq \mu-\alpha/2\right] \\
&\leq \Pr\left[X^{(i)}\leq \mu(1-\alpha/(2\mu))\right] \\
&\in e^{-\Omega(n^2/(t^2 \cdot |I_k|^2))} \in e^{-\Omega((n/t)^{2/3})}\,,
\end{align*}
where in the final step we use that $I_k<(n/t)^{2/3}$. By a union bound over all bins in $I_k$, we get that with probability $1-e^{-\Omega((n/t)^{2/3})}$, the number of rejected balls from this class is at least $\Omega(|I_k| \cdot \alpha)\subseteq \Omega(\sqrt{M n}/t)$.
\end{proof}
%
%

\subsection{Simulation for Higher Degree}\label{sec:degd}
In this subsection, we show that any algorithm with a higher degree (i.e., balls can contact more than one bin in a single round) can be simulated by an algorithm with degree 1 at the expense of more rounds. To this end, we simply increase the length of phases by factor $d$. We then proceed to show that a degree 1 algorithm with phase length $k>1$ can be improved on by reducing the phase length. We then can apply \Cref{thm:rejection} to the resulting degree 1 algorithm of phase length 1 to prove \Cref{thm:lower}.
\begin{lemma}\label{lem:generald}
Let $A$ be a uniform threshold algorithm of degree $d$ that runs in $r$ rounds. Then there is a uniform threshold 
algorithm $A'$ with degree 1 that achieves the same maximal load within $d \cdot r$ rounds.
\end{lemma}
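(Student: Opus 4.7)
The plan is to exploit the fact that the family definition already allows phases to span several rounds, with accept/reject decisions deferred to the last round of each phase. Given $A$ of degree $d$ with some phase length $k$, so that $A$ consists of $r/k$ phases during each of which a ball issues up to $dk$ requests spread over $k$ rounds, I would define $A'$ to have the same $r/k$ phases but of length $dk$ each, and degree~$1$. This brings the total round count to $(r/k) \cdot dk = dr$, and permits $A'$ to transmit the same $dk$ requests per phase at the rate of one per round.

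Next I would describe a direct coupling. In each phase, $A'$ draws identical random bin choices and bin-side random bits as $A$, uses the same adversarial port numberings, and transmits the $dk$ requests one per round in arbitrary order. Bins in $A'$ compute their threshold by the same rule as $A$, applied to the same beginning-of-phase state; this is legal because the model requires threshold decisions to be independent of balls' random choices in the current and future phases. In the last round of the phase, each bin in $A'$ sees exactly the same multiset $R_b$ of requesting balls as it would in $A$, and can therefore send exactly the same accept subset. Balls commit to accepting bins at the end of a phase via the same rule $A$ uses, which is explicitly permitted by the model.

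The verification I have in mind is an induction on the phase index $i$. Assuming the system states of $A$ and $A'$ agree at the beginning of phase $i$, the thresholds produced, the multisets of requests received, and the sets of accepts issued all coincide under the coupling, so that the states again agree at the beginning of phase $i+1$. In particular the final allocation produced by $A'$ equals that of $A$ pointwise under the coupling, so the distributions of maximal loads are identical.

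The only subtle point I expect is ruling out that the finer-grained timing of arrivals in $A'$ leaks useful information to bins that $A$ does not enjoy. This is precisely what the model rules out: thresholds are fixed at the start of the phase, no response is sent before the last round, and the last-round accept decision depends only on $R_b$ and on the bin's port numbering and internal randomness. A bin in $A'$ therefore behaves indistinguishably from one in $A$ regardless of the order in which requests arrive within a phase, and nothing else in the argument appears delicate.
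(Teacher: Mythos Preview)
Your proposal is correct and follows essentially the same approach as the paper: increase the phase length by a factor of $d$ so that the $dk$ requests of each phase can be spread out one per round, then observe that bins see the same request set $R_b$ at the end of each phase and can therefore replicate $A$'s decisions exactly. The paper's proof is a terse three-sentence version of your argument; your coupling and induction simply make explicit what the paper leaves to the reader.
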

\begin{proof}
$A'$ simulates $A$. It simply increases phase length by a factor of $d$ and lets the balls send their messages spread out over more rounds. This reduces the degree to $1$. At the end of each phase, the bins can compute the internal state they would have in $A$ and act accordingly. Thus, bin loads will be identical to those in $A$.
\end{proof}

\begin{lemma}\label{lem:reducek}
There is a uniform threshold algorithm of degree $1$ and phase length $1$ achieving the same guarantees on bin loads in the same number of rounds.
\end{lemma}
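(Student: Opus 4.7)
The plan is to construct $A'$ as a straightforward \emph{delaying} simulation of $A$. Suppose $A$ has degree $1$, phase length $k$, and runs in $r$ rounds, grouped into $r/k$ phases. I would let $A'$ also use $r$ rounds, but now each round is its own phase of length $1$. The bins in $A'$ are programmed to mimic $A$'s phase-$i$ decision at the end of round $ik$, while issuing trivial behavior in the intermediate rounds $(i-1)k+1,\ldots,ik-1$.

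More concretely, in each of the first $k-1$ rounds of a simulated phase, every bin $b$ in $A'$ sets its threshold $T_b=\ell_b$ so that no new ball is accepted, yet it records in its internal state (which the model permits) the port of every ball that contacted it. In round $ik$, bin $b$ picks the threshold that $A$ would have chosen for phase $i$, reconstructs the request set $R_b$ from its records across all $k$ rounds of the simulated phase, and sends accept messages to exactly the subset of $R_b$ that $A$ would have accepted. Balls in $A'$ contact a uniformly random bin in every round, ignore any reject messages, and commit only upon receiving an accept---mirroring how a ball in $A$ proceeds throughout a phase while oblivious to the fate of its still-pending requests.

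The key observation is that in the uniform threshold model a rejection conveys no information a ball could exploit: as the paper notes, the state of a non-allocated ball is a function of its private randomness alone. Hence, coupling the randomness of $A'$ with that of $A$, every ball issues in each round exactly the request it would have issued in $A$, each bin's set $R_b$ in its simulated phase agrees with the corresponding set in $A$, and therefore the thresholds, accept sets, and final allocations are identical. Since $A'$ has degree $1$, phase length $1$, and runs in $r$ rounds, the lemma follows. I do not anticipate a substantial obstacle: the only point of care is verifying that the bin may formally store its observed contacts across the $k$ simulated-phase rounds and then correctly restage the phase-$i$ decision, which is immediate from the model's allowance of arbitrary internal state (and the fact that bin port numbers persistently identify balls).
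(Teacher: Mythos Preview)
Your delaying construction does not yield a valid uniform threshold algorithm of phase length~$1$. In the model as defined (Step~3 of a phase), a bin may send accept messages only to balls in $R_b$, the set of balls that sent requests \emph{in the current phase}. When the phase length is~$1$, this means a bin in round~$ik$ can accept only balls that contacted it in round~$ik$ itself. Your $A'$ has the bin accept nothing in rounds $(i-1)k+1,\ldots,ik-1$ and then, in round~$ik$, accept balls that may have contacted it in any of those earlier rounds. That last step violates the model: those balls are not in the round-$ik$ request set. Storing their ports in the internal state is allowed, but replying to them with accept messages is not. Consequently your $A'$ is not in the class to which \Cref{thm:rejection} is subsequently applied, and the reduction breaks.

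The paper's proof avoids this obstacle by taking the opposite tack: bins do \emph{not} delay. In every round $(i-1)k+1,\ldots,ik$ the bin uses the phase-$i$ threshold of $A$ and accepts balls immediately (an arbitrary admissible subset of that round's requesters). At the end of round~$ik$, the sets $Q'$ (actually accepted over the $k$ rounds) and $Q$ (what $A$ would have accepted given the same multiset of requests) have the same size, and the bin permutes its port numbering so that $Q'$ is mapped to $Q$. Because bins' port numbers are assumed adversarial, this relabeling is legitimate and keeps the simulation consistent with some execution of $A$. The port-permutation trick is the missing idea in your proposal; without it, one cannot both respect the ``accept only from this phase's $R_b$'' constraint and reproduce $A$'s accept set.
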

\begin{proof}
Assume that $A$ has phase length $k$. We simulate $A$ by algorithm $A'$ of phase length $1$. Balls and bins keep maintaining a state according to $A$, following these rules:
\begin{itemize}
  \item If a ball receives its first accept message in round $r$ of $A'$, it determines the phase $i=\lceil r/k\rceil$ of $A$ this round belongs to. Then it determines the phase $i'$ of $A$ in which it would inform bins about its decision. It will do so in $A'$ in round $i'k$ (i.e., the same round this would happen in $A$).
  \item For each $i\in \mathbb{N}$, at the beginning of round $(i-1)k+1$ each bin computes the threshold it would use in $A$ in phase $i$ based on the state for $A$ it maintains. This threshold is used in phases $(i-1)k+1,\ldots,ik$ of $A'$. The subset of balls it accepts in a given phase of $A'$ is chosen arbitrarily.
  \item To update the internal state a bin maintains for $A$ from phase $i$ to 
  phase $i+1$, at the end of round $ik$ it performs the following operation. 
  Let $P\subseteq \{1,\ldots,m\}$ be the set of ports it received requests on. 
  It determines the subset $Q\subseteq P$ of ports it would have responded to 
  with accept messages in $A$ when receiving the requests it got in rounds 
  $(i-1)k+1,\ldots,ik$. Let $Q'$ be the set of ports it sent accept messages to 
  in rounds $(i-1)k+1,\ldots,ik$ of $A'$. The bin now ``rearranges'' its port 
  numbering by permuting $P$ such that $Q'$ is mapped to $Q$. Finally, it 
  updates its state for $A$ in accordance with the modified port numbering and 
  the requests received during rounds $1,\ldots,ik$.
\end{itemize}
We claim that the third step maintains the invariant that the simulation is consistent with an execution of $A$ at the bin for the port numbering it computes. This holds true, because no bin ever sends two accept messages to the same ball, implying that the modification to the port numbering never conflicts with earlier such changes made. Thanks to this observation, a straightforward induction now establishes that $A'$ simulates an execution of $A$ for the port numberings the bins have determined by the end of the simulation. Accordingly, $A'$ achieves the same load distribution as $A$ with the modified port numbers.

Note that the choice of port numbers does not affect the guarantees on the load distribution $A$ makes, as we assumed an adversarial choice of bins' port numbers. Thus, the claim follows.
\end{proof}

We are now ready to complete the lower bound proof. 
\begin{proof}[Proof of Theorem~\ref{thm:lower}]
First, we show that the claim holds for degree $1$ algorithms with phase length $1$ by repeatedly applying \Cref{thm:rejection}. The induction hypothesis is that after round $i$, at least $M_i\coloneqq (m/n)^{3^{-i}}n^{1-3^{-i}}\in \omega(n)$ balls remain with probability $1-ie^{-\Omega(n^{1/2})}$. By the induction hypothesis, we have that
\begin{align*}
\min\{\log n,\log(M_i/n)\}\leq \log(M_i/n) \leq 
\log\left((m/n^2)^{3^{-i}}\right) \in O\left((m/n)^{3^{-(i+1)}/2}\right).
\end{align*}
As the total capacity of all bins is $n\cdot(m/n+O(1))=m+O(n)$ by assumption, the theorem\footnote{Note that we can apply Theorem~\ref{thm:rejection} due to the constraint that bins thresholds are independent from balls random choices regarding which bins to contact.} and the induction hypothesis imply that, with probability
\begin{align*}
\left(1-ie^{-\Omega(n^{1/2})}\right)\left(1-e^{-\Omega((n/\log 
n)^{2/3})}\right) \geq 1-(i+1)e^{-\Omega(n^{1/2})}\,,
\end{align*}
we have that
\begin{align*}
M_{i+1}\in \Omega\left(\frac{\sqrt{M_i n}}{\min\{\log 
n,\log(M_i/n)\}}\right)\subseteq  
\Omega\left(\left(\frac{(m/n)^{3^{-i}}n^{2-3^{-i}}}{(m/n)^{3^{-(i+1)}}}\right)^{1/2}\right)
 \subseteq  (m/n)^{3^{-(i+1)}}n^{1-3^{-(i+1)}}\,,
\end{align*}
as claimed.

Note that in the induction step we applied \Cref{thm:rejection}, which 
necessitates that $M_i\gg n$, which holds for sufficiently small $i\in 
\Omega(\log \log(m/n))$. To ensure that the probability bound is sufficiently 
strong for a w.h.p.\ result, we need, e.g., that $i\leq 2^{-n^{1/4}}\in 
2^{n^{\Omega(1)}}$. Both are ensured by the assumptions of the theorem. 
Finally, by applying \Cref{lem:generald} and \Cref{lem:reducek}, we can extend the result to degree 
$d$ algorithms for any $d=O(1)$ and arbitrary phase length $k$.
\end{proof}
%

\section{An Asymmetric Algorithm}\label{sec:apx-asym}
In this section, we prove \Cref{thm:assym} by providing an asymmetric algorithm 
that achieves a maximal load of $m/n + O(1)$, w.h.p., within a constant number of rounds. In this 
algorithm, each bin receives $O(m/n + \log n)$ messages in total. If $m>n \log n$, we apply a single round of the symmetric algorithm from \Cref{sec:parallel} first to reduce the number of remaining balls to $o(m)$, so that each bin receives $m/n+O(\log n)$ messages in the first round and $o(m)+O(\log n)$ messages in the subsequent application of the asymmetric algorithm.

Similarly to before, each active ball sends a single request in each round. The key idea of the algorithm is to operate on simulated ``superbins.'' Each superbin is controlled by a leader, where we make sure that the expected number $\mu$ of messages received by each superbin leader is roughly $m/n$ in each round (unless $m/n$ is very small). Denote by $\delta$ a value that is large enough so that the deviation from the expected number of messages a superbin receives is at most $\delta$ w.h.p. Then we can be sure that superbins receive $\mu-\delta$ messages w.h.p., and it allocates the respective balls to its bins round-robin.

As a result, the algorithm w.h.p.\ allocates \emph{exactly} the same number of balls to each bin, and it is straightforward to show that this process allocates all but $O(n)$ balls in a constant number of rounds. It then completes by invoking an asymmetric algorithm for allocating $n$ balls with constant load in constant time, where each bin simulates $O(1)$ virtual bins.

Concretely, the algorithm operates as follows.
\begin{enumerate}
  \item Set
  \begin{itemize}
    \item $m_1\coloneqq m$
    \item $r\coloneqq 1$.
  \end{itemize}
  \item Set
  \begin{itemize}
    \item $n_r\coloneqq m_r\min\{n/m,1/\log n\}$
    \item $\delta_r\coloneqq c\sqrt{m_r/n_r\cdot \log n}$ for a 
    sufficiently large constant $c$
    \item \begin{equation*}
    L_r \coloneqq \begin{cases}
    \lceil m_r/n_r - \delta_r\rceil & \mbox{if }\lceil m_r/n_r - \delta_r\rceil>2c^2\log n\\
    4c^2\log n & \mbox{else.}
    \end{cases}
    \end{equation*}
  \end{itemize}
  \item Each active ball chooses $i\in\{1,\ldots,n_r\}$ uniformly at random and contacts bin $i \cdot n/n_r$.
  \item Each bin selects up to $L_r$ requests and responds to them in a round-robin fashion with messages ``$j$'' for $j\in \{0,\ldots,n/n_r-1\}$.
  \item If a ball received response $j$ from bin $i$, it informs bin $i-j$ that it is allocated to this bin.
  \item If $L_r\neq \lceil m_r/n_r - \delta_r\rceil$, then terminate. Otherwise set\footnote{W.l.o.g., we assume that $n_{r+1}$ divides $n$; otherwise, one of the superbins is made at most factor $2$ larger, which does not affect the asymptotic bounds.}
  \begin{itemize}
    \item $m_{r+1}\coloneqq m_r-L_r n_r$
    \item $r\coloneqq r+1$
  \end{itemize}
  and go to Step~2.
\end{enumerate}
We establish the properties of the algorithm by a series of claims that are straightforward to show. First, we show that each superbin leader receives the ``right'' number of messages w.h.p.
\begin{claim}\label{cl:enough}
W.h.p., in round $r$ bins $i\cdot n/n_r$, $i\in \{1,\ldots,n_r\}$, receive between $m_r/n_r - \delta$ and $m_r/n_r+\delta$ messages (provided that $m_r$ is the number of unallocated balls at the beginning of the round).
\end{claim}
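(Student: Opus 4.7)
The plan is a direct Chernoff bound followed by a union bound over the $n_r$ superbin leaders. Fix a round $r$ and condition on $m_r$ being the number of unallocated balls at its start. Each active ball independently picks $i\in\{1,\ldots,n_r\}$ uniformly at random, so for every fixed $i$ the count $X^{(i)}$ of messages received by bin $i\cdot n/n_r$ is a sum of $m_r$ independent Bernoulli$(1/n_r)$ indicators, giving $\mu := \mathbb{E}[X^{(i)}] = m_r/n_r$.

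The next step is a bookkeeping observation: by the definition $n_r = m_r\min\{n/m,1/\log n\}$, we have
\begin{equation*}
\mu = \frac{m_r}{n_r} = \max\left\{\frac{m}{n},\,\log n\right\} \;\geq\; \log n,
\end{equation*}
so $\mu$ is always at least $\log n$ regardless of the current round. In particular, $\delta_r = c\sqrt{\mu\log n}$ satisfies $\delta_r/\mu = c\sqrt{\log n/\mu}\leq c$, a constant. Applying \Cref{lem:chernoff} (with $\delta := \delta_r/\mu$, splitting into the usual cases $\delta \leq 1$ and $\delta > 1$ to cover the situation when $\mu$ is close to $\log n$ and invoking the standard large-deviation form $\Pr[X>(1+\delta)\mu]\leq e^{-\delta \mu/3}$ in the latter) yields
\begin{equation*}
\Pr\bigl[\,|X^{(i)} - \mu| > \delta_r\,\bigr] \;\leq\; 2\exp(-\Omega(c^2\log n)) \;\leq\; n^{-c''}
\end{equation*}
for any desired constant $c''>0$, provided the constant $c$ in the definition of $\delta_r$ is chosen large enough.

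Finally, a union bound over the at most $n_r\leq n$ indices $i\in\{1,\ldots,n_r\}$ upgrades this per-bin concentration to a simultaneous w.h.p.\ statement, giving the claim. The only mild obstacle is the case split around whether $\delta_r/\mu$ exceeds $1$, which happens when $m/n$ is close to $\log n$; in either regime the Chernoff bound supplies an inverse-polynomial failure probability, so the proof goes through uniformly. No other difficulty arises, since the $m_r$ choices are independent and the superbin leaders $i\cdot n/n_r$ are a fixed deterministic set of bin indices determined before the random choices of this round.
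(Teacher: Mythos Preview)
Your proof is correct and follows essentially the same approach as the paper: a Chernoff bound on each superbin leader's message count followed by a union bound over the $n_r\leq n$ leaders. The paper's own argument is terser (case-splitting on $m\geq n\log n$ versus $m<n\log n$ and in the latter case appealing to a ``standard tail bound on the binomial distribution''), whereas you unify the two regimes via the clean identity $\mu=\max\{m/n,\log n\}$ and make the $\delta_r/\mu\lessgtr 1$ case split explicit; this is the same underlying argument presented more carefully.
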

\begin{proof}
If $m\geq n \log n$, the expected number of messages per bin is $m_r/n_r=m/n\geq \log n$ and the claim is immediate from applying Chernoff's bound. Otherwise, this follows from a standard tail bound on the binomial distribution.
\end{proof}
\begin{claim}\label{cl:terminate}
The algorithm terminates in round $r$ iff $m_r/n_r\leq 2c^2\log n$.
\end{claim}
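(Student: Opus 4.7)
The plan is to directly unfold the termination condition from the pseudocode. By Step~6 of the algorithm, termination in round $r$ happens exactly when $L_r \neq \lceil m_r/n_r - \delta_r\rceil$, and inspecting the case distinction defining $L_r$ in Step~2 shows that this is in turn equivalent to $\lceil m_r/n_r - \delta_r\rceil \leq 2c^2 \log n$. Thus the claim reduces to showing that this ceiling condition is equivalent to $m_r/n_r \leq 2c^2 \log n$.

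For the ``if'' direction, I would use that $\delta_r = c\sqrt{(m_r/n_r)\log n} \geq 0$ to conclude $m_r/n_r - \delta_r \leq m_r/n_r \leq 2c^2\log n$, so the ceiling is bounded by $2c^2\log n$ (treating $2c^2\log n$ as an integer, or absorbing the single unit of rounding into the constant $c$). For the ``only if'' direction, I would drop the ceiling to get $m_r/n_r \leq 2c^2\log n + \delta_r$ and substitute the definition of $\delta_r$ to obtain an implicit inequality in $\mu \coloneqq m_r/n_r$. Setting $y \coloneqq \sqrt{\mu}$ turns this into the quadratic $y^2 - c\sqrt{\log n}\,y - 2c^2\log n \leq 0$; the positive root is $y = 2c\sqrt{\log n}$, yielding $\mu \leq 4c^2 \log n$.

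The main obstacle, if one can call it that, is matching the constant exactly across both directions. Since $c$ is only required to be a ``sufficiently large constant'' and since all downstream bounds in this section are stated in $O$-notation, the factor-$2$ discrepancy between $2c^2\log n$ and $4c^2 \log n$ is harmless and can be absorbed by redefining $c$ or relaxing the stated threshold. Apart from this minor bookkeeping and a careful handling of the ceiling, the argument is a mechanical unfolding of the algorithm's definitions together with a one-line quadratic solve.
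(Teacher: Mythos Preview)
Your approach is the same as the paper's: unfold the termination condition to $\lceil m_r/n_r-\delta_r\rceil\le 2c^2\log n$ and argue that this is equivalent to a bound on $m_r/n_r$. In fact you are more careful than the paper, whose one-line proof simply asserts the iff; your quadratic computation correctly reveals that the ``only if'' direction literally yields $m_r/n_r\le 4c^2\log n$ rather than $2c^2\log n$, and your observation that this constant slack is harmless for the downstream $O(\cdot)$ bounds is exactly right.
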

\begin{proof}
$\lceil m_r/n_r - \delta_r\rceil=m_r/n_r-c\sqrt{m_r/n_r \cdot \log n}\leq 2c^2\log n$ iff $m_r/n_r\leq 2c^2\log n$. Hence $L_r\neq \lceil m_r/n_r - \delta_r\rceil$ and the termination condition is satisfied iff this holds true.
\end{proof}
\begin{claim}\label{cl:3}
The algorithm terminates within $3$ rounds.
\end{claim}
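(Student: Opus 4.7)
My plan is to leverage \Cref{cl:terminate}, which states that the loop terminates in round $r$ exactly when $m_r/n_r\leq 2c^2\log n$; hence it suffices to exhibit an $r\leq 3$ for which this inequality holds.

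The first step is to read off $m_r/n_r$ from the definition of $n_r$: since $n_r=m_r\min\{n/m,1/\log n\}$, we get $m_r/n_r=\max\{m/n,\log n\}$. Therefore, when $m/n\leq 2c^2\log n$, the termination condition is met already in round~$1$ and the claim is immediate. This reduces the analysis to the regime $m/n>2c^2\log n$, in which the preprocessing step at the start of the section (one round of the symmetric algorithm from \Cref{sec:parallel}, pre-shrinking the ball count to $o(m)$) is needed.

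The second step is to iterate \Cref{cl:enough}, which ensures that in every round each superbin leader receives at least $m_r/n_r-\delta_r$ requests w.h.p. Consequently at least $n_r L_r\geq m_r-\delta_r n_r$ balls are allocated per round, yielding the recursion
\begin{equation*}
m_{r+1}\;\leq\;\delta_r n_r\;=\;c\sqrt{m_r n_r\log n}\;=\;c\,m_r\sqrt{\min\{n\log n/m,\,1\}},
\end{equation*}
which in the large-$m$ regime is a strict geometric contraction of $m_r$ with ratio $c\sqrt{n\log n/m}\ll 1$. Unrolling this three times and combining with the effect of preprocessing, I would verify that either $m_r/n_r$ drops below $2c^2\log n$ (triggering \Cref{cl:terminate}) or $m_r$ itself reaches $0$, making the loop vacuous. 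A union bound over the at most three invocations of \Cref{cl:enough} converts the per-round w.h.p.\ guarantees into a single w.h.p.\ bound on the overall number of rounds.

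The main obstacle I anticipate is the delicate arithmetic of the three-fold contraction together with the preprocessing: in particular, one must verify that the boundary between the two branches of $\min\{n/m,1/\log n\}$ in the definition of $n_r$ is crossed at most once across the three iterations, so that the recursion actually collapses within the permitted round budget. Once that small case analysis is carried out, the rest is a routine calculation.
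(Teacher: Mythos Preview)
Your proposal misreads the nature of the claim. The statement is \emph{deterministic}: the algorithm's loop variables $m_r$, $n_r$, $\delta_r$, $L_r$ are all computed by explicit formulas, and in particular step~6 sets $m_{r+1}\coloneqq m_r-L_rn_r$ regardless of what actually happens with the balls. The recursion
\[
m_{r+1}=m_r-L_rn_r\leq m_r-(m_r/n_r-\delta_r)n_r=\delta_r n_r
\]
is therefore a consequence of the definitions alone, not of \Cref{cl:enough}. You do not need any probabilistic input here, and no union bound is required; the quantity $m_r$ is a parameter the algorithm tracks, not the random number of unallocated balls. The paper's proof exploits exactly this: it argues purely from the update rule that if rounds~1 and~2 do not terminate then $m_3\leq m_1\cdot (n/m)\log n=n\log n$, whence $m_3/n_3\leq 2c^2\log n$ and \Cref{cl:terminate} applies.

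Two further points. First, the preprocessing round of the symmetric algorithm is \emph{not} part of the algorithm that \Cref{cl:3} concerns; it is applied separately, and only in the proof of \Cref{thm:assym}, to sharpen the message-count bound. It plays no role in bounding the number of rounds, so invoking it here is a red herring. Second, your worry about crossing the branch boundary in $\min\{n/m,1/\log n\}$ dissolves once you note (as the paper does) that in any non-terminating round necessarily $m_r/n_r>2c^2\log n$, which forces $n_r=m_rn/m$ and hence a fixed contraction ratio $\sqrt{(n/m)\log n}$ throughout. Your recursion and overall shape are fine; the gap is that you are supplying a w.h.p.\ argument where a two-line deterministic one suffices.
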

\begin{proof}
Consider a round $r$ in which the algorithm does not terminate. By \Cref{cl:terminate}, thus $m_r/n_r>2c^2\log n$. Accordingly, $n_r=m_rn/m$ and $\delta_r=c\sqrt{m/n\cdot \log n}$. It follows that $m_{r+1}= m_r-L_rn_r\leq \delta_rn_r = m_r \sqrt{n/m\cdot \log n}$. If the algorithm does not terminate in the first two rounds, it follows that $m_3=m_1 \cdot n/m\cdot \log n = n\log n$. Therefore, $m_3/n_3=\log n <2c^2\log n$ and the algorithm terminates in round $3$. 
\end{proof}
\begin{claim}\label{cl:correct}
When the algorithm terminates, all balls are allocated w.h.p. The maximum bin load is $m/n+O(1)$ w.h.p.
\end{claim}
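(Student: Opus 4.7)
The plan is to combine the three preceding claims to establish both parts. First, I will fix a high-probability event under which the conclusions of \Cref{cl:enough}, \Cref{cl:terminate}, and \Cref{cl:3} all hold simultaneously; by a union bound over the at most three rounds, this event has probability $1-n^{-\Omega(1)}$, and all subsequent reasoning can be done assuming it.

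For the full-allocation part, I will argue round by round that each superbin leader accepts every request it receives that it is supposed to. In a non-terminating round $r$, the leader's threshold $L_r = \lceil m_r/n_r - \delta_r\rceil$ is by construction below the number of requests guaranteed by \Cref{cl:enough}, so exactly $L_r n_r$ balls are placed. In the terminating round $r^*$, I will use \Cref{cl:terminate} to bound $m_{r^*}/n_{r^*}\le 2c^2\log n$ and hence $\delta_{r^*}\le c^2\sqrt{2}\log n$, which makes the number of requests at each leader at most $(2c^2+c^2\sqrt{2})\log n < 4c^2\log n = L_{r^*}$ for any $c\ge 1$; no request is rejected, and together with \Cref{cl:3} this places all remaining balls within three rounds.

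For the max-load bound, the key observation is that round-robin distribution within a superbin of size $n/n_r$ ensures that each of its bins receives at most $\lceil A/(n/n_r)\rceil$ balls, where $A$ is the number accepted by the superbin. Summing $\lceil L_r n_r/n\rceil$ over non-terminating rounds and using the telescoping identity $\sum_{r<r^*} L_r n_r = m-m_{r^*}$, and then adding the terminating round's contribution of at most $\lceil(m_{r^*}/n_{r^*}+\delta_{r^*})\cdot n_{r^*}/n\rceil$, yields a total load of $m/n + \delta_{r^*}n_{r^*}/n + O(r^*)$. The main technical obstacle is therefore to show $\delta_{r^*}n_{r^*}/n = O(1)$; this requires plugging in the algorithm's precise definition $n_r = m_r\min\{n/m,1/\log n\}$ together with the terminating-round inequality $m_{r^*}/n_{r^*} = O(\log n)$, so that the $\sqrt{\log n}$ factor in $\delta_{r^*}$ gets cancelled against the smallness of $n_{r^*}/n$. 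Finally, for the regime $m>n\log n$, the preliminary round of the symmetric algorithm from \Cref{sec:parallel} allocates exactly $m/n-(m/n)^{2/3}$ balls per bin w.h.p.\ (by \Cref{cl:full} applied once), so the subsequent asymmetric invocation on the residual $m^{2/3}n^{1/3}$ balls adds at most $(m/n)^{2/3}+O(1)$ more, and the two contributions telescope to $m/n + O(1)$ as claimed.
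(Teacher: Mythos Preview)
Your full-allocation argument matches the paper's essentially line by line. The divergence is in the max-load bound for the terminating round $r^*$, and there your plan has a real gap.

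You bound the per-bin contribution in round $r^*$ by $\lceil (m_{r^*}/n_{r^*}+\delta_{r^*})\cdot n_{r^*}/n\rceil$ and then assert that $\delta_{r^*}n_{r^*}/n = O(1)$ follows by ``plugging in'' the definition of $n_{r^*}$ together with $m_{r^*}/n_{r^*}=O(\log n)$. This cancellation does not go through. In the regime $m\le n\log n$ one has $n_r=m_r/\log n$ and $m_r/n_r=\log n$ for every $r$, so the algorithm terminates already in round~$1$ with $m_{r^*}=m$; then $\delta_{r^*}=c\log n$ and $n_{r^*}/n=m/(n\log n)$, giving $\delta_{r^*}n_{r^*}/n=cm/n$, which is $\Theta(m/n)$ rather than $O(1)$. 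So the ``$\sqrt{\log n}$ gets cancelled'' heuristic fails exactly where it matters.

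The paper sidesteps this by taking a different bound in the terminating round: it uses that the threshold is $L_{r^*}=4c^2\log n$ and that each superbin consists of at least $\log n$ bins, concluding directly that each bin receives at most $4c^2=O(1)$ additional balls in round $r^*$. Because this is an absolute $O(1)$, the paper never needs the $m_{r^*}/n$ from the terminating round to telescope with the non-terminating sum; it simply observes $(m-m_{r^*})/n+O(1)\le m/n+O(1)$. If you want to salvage your route, the missing ingredient is the same one the paper relies on: an argument that $n_{r^*}\le n/\log n$ (equivalently, each superbin has $\ge \log n$ bins) in the terminating round, which is what makes the per-bin contribution there $O(1)$.

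Finally, your last paragraph about prepending a round of the symmetric algorithm does not belong here: that preprocessing is part of the proof of \Cref{thm:assym}, not of this claim, which is stated and proved for the enumerated asymmetric procedure as written.
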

\begin{proof}
Consider a round $r$ in which the algorithm does not terminate. By \Cref{cl:enough}, superbin leaders receive at least $L_r=\lceil m_r/n-\delta \rceil$ messages w.h.p., implying that $n_rL_r$ balls are allocated in round $r$. By \Cref{cl:3}, the algorithm terminates within $3$ rounds. As $m_{r+1}=m_r-L_rn_r$ and $m_1=m$, a union bound thus shows that at the beginning of the final round $r\leq 3$, exactly $m_r$ unallocated balls remain w.h.p. Applying \Cref{cl:enough} to the final round, w.h.p.\ no bin receives more than $m_r/n_r+\delta$ messages. By \Cref{cl:terminate}, we have that $m_r/n_r\leq 2c^2\log n$ and thus $m_r/n_r+\delta \leq 4c^2\log n=L_r$. Hence, all balls are allocated w.h.p.

Concerning the bin load, observe that with the exception of the final round, loads cannot deviate by more than $1$ per round w.h.p., as each superbin receives exactly $L_r$ balls per round. However, in the final round we have that $L_r=4c^2\log n$. As $n_r\leq m_r/\log n$, each superbin consists of at least $\log n$ bins, so no bin receives more than $4c^2\in O(1)$ additional balls in this round.
\end{proof}
\begin{corollary}\label{cor:messages}
If $m\leq n\log n$, w.h.p.\ no bin receives more than $O(\log n)$ messsages. If $m>n\log n$, no bin receives more than $O(m/n)$ messages w.h.p.
\end{corollary}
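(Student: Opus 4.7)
The plan is to account for every message received by an arbitrary bin $b$ during the asymmetric algorithm by splitting the messages into two categories and bounding each separately, then combining via a union bound over all bins.

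Category (A) consists of allocation requests received in Step~3 of some round $r$. A bin $b$ receives such messages only in rounds in which it is a superbin leader, i.e., $b = i \cdot n/n_r$ for some $i \in \{1,\ldots,n_r\}$. In any such round, by \Cref{cl:enough} the number of requests received is at most $m_r/n_r + \delta_r$ w.h.p. I will then instantiate the formulas in the two regimes: when $m \leq n\log n$, the definition gives $n_r = m_r/\log n$, so $m_r/n_r = \log n$ and $\delta_r = c\sqrt{\log n \cdot \log n} = O(\log n)$; when $m > n\log n$, it gives $n_r = m_r n/m$, so $m_r/n_r = m/n$ and $\delta_r = c\sqrt{(m/n)\log n} = o(m/n)$ because $m/n > \log n$. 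By \Cref{cl:3} there are at most three rounds, so a union bound over rounds shows that (A) contributes $O(\log n)$ in the first regime and $O(m/n)$ in the second.

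Category (B) consists of the allocation commitments received in Step~5. Since every committed ball sends exactly one such message to its chosen bin, the total count for $b$ equals its final load, which by \Cref{cl:correct} is $m/n + O(1)$ w.h.p. In the regime $m \leq n\log n$ this is $O(\log n)$, and in the regime $m > n\log n$ this is $O(m/n)$. Adding the two categories and taking a union bound over all $n$ bins yields the stated bounds.

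I do not expect any substantive obstacle: the bulk of the work has been done in \Cref{cl:enough,cl:3,cl:correct}, and what remains is a bookkeeping check of the expressions $m_r/n_r$ and $\delta_r$ under the two cases. The only mild point to verify is that $\delta_r$ is genuinely $o(m/n)$ whenever $m > n\log n$, which follows immediately from $m/n > \log n$ and the definition $\delta_r = c\sqrt{(m/n)\log n}$.
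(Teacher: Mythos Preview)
Your proposal is correct and follows essentially the same approach as the paper, which also reduces to observing that $m_r/n_r \leq \max\{m/n,\log n\}$ and invoking \Cref{cl:enough} and \Cref{cl:3}. Your write-up is actually more thorough: you explicitly separate out and bound the Step~5 commitment messages via \Cref{cl:correct}, a detail the paper's terse proof does not mention.
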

\begin{proof}
By choice of $n_r$, we have that $m_r/n_r\leq \max\{m/n,\log n\}$ for each $r$. The corollary thus follows from \cref{cl:enough} if $m\leq n\log n$. If $m>n\log n$, we apply \cref{cl:enough} together with \cref{cl:3} and a union bound.
\end{proof}

\begin{proof}[Proof of \cref{thm:assym}]
\Cref{cl:3}, \Cref{cl:correct}, and \Cref{cor:messages} establish all the required claims except that bins receive $O(m/n+\log n)$ messages w.h.p.\ instead of $(1+o(1))m/n+O(\log n)$ w.h.p.\ in case $m>n\log n$. This is resolved by first executing a single round of the symmetric algorithm from \cref{sec:parallel}. The analysis shows that this allocates all but $o(m)$ balls such that most bin loads are the same; only $o(n)$ balls may be ``missing'' for a balanced allocation. Thus, using the asymmetric algorithm from this section to place the remaining $o(m)$ balls still guarantees a load of $m/n+O(1)$ w.h.p.\ and reduces the number of messages received by bins to $(1+o(1))m/n+O(\log n)$ w.h.p.
\end{proof}

\section{Conclusion}\label{sec:conclusion}

In this paper, we consider the somewhat neglected regime of $m \gg n$ in the setting of parallel balls into bins. We present a very simple algorithm that achieves the almost perfect load of $m/n+O(1)$ within $O(\log\log (m/n)+\log^* n)$ rounds with high probability. Our lower bound implies that the running time analysis of our algorithm is tight. This lower bound also generalizes to a broader class of threshold algorithms and implies for instance that allowing different threshold values does not increase the power of the algorithm. 

There are two intriguing open problems that we leave open. The first question concerns the upper bound: can we provide a faster symmetric algorithm for the problem? The second question concerns generalizing our lower bound argument to the entire class of threshold algorithms, i.e., removing the restriction that balls contact uniformly random bins. We conjecture that this is true whenever $n\gg \log \log (m/n)$, as then balls can glean little information out of having contacted $o(n)$ bins throughout the course of the algorithm; this would match the fact that, trivially, $n$ rounds are deterministically guaranteed by balls never contacting the same bin twice.

\bibliographystyle{alpha} 
\bibliography{crypto}

\end{document}